\def\ps@headings{%
\def\@oddhead{\mbox{}\scriptsize\rightmark \hfil \thepage}%
\def\@evenhead{\scriptsize\thepage \hfil \leftmark\mbox{}}%
\def\@oddfoot{}%
\def\@evenfoot{}}
\theoremstyle{plain}
\newtheorem{theorem}{Theorem}
\newtheorem{lemma}{Lemma}
\newtheorem{corollary}{Corollary}
\newtheorem{definition}{Definition}
\newcommand{\Tc}{\mathcal{R}_{nc}}
\newcommand{\arc}[1]{\buildrel \rightarrow \over {#1}}
\begin{document}
\title{A Graph Minor Perspective to Multicast Network Coding}

\author{Xunrui~Yin,
        Yan~Wang,~\IEEEmembership{Student Member,~IEEE,}
        Zongpeng~Li,~\IEEEmembership{Senior Member,~IEEE,}
\thanks{X. Yin and Z. Li are with the Department of Computer Science, University of
Calgary, e-mail:
\{xunyin,zongpeng\}@ucalgary.ca}
        Xin~Wang,~\IEEEmembership{Member,~IEEE,}
        Xiangyang~Xue,~\IEEEmembership{Member,~IEEE}
\thanks{Y. Wang, X. Wang, and X. Xue are with the School of Computer Science, Fudan
University (e-mail: \{11110240029, xinw,
xyxue\}@fudan.edu.cn). \ Y. Wang is also with the School of Software, East China Jiao Tong University.}
\thanks{This paper was presented in part at IEEE INFOCOM 2013. The main work was done when X. Yin was with Fudan University.}
}

\maketitle

\begin{abstract}
Network Coding encourages information coding across a communication network. While the necessity, benefit and complexity of network coding are sensitive to the underlying graph structure of a network, existing theory on network coding often treats the network topology as a black box, focusing on algebraic or information theoretic aspects of the problem. This work aims at an in-depth examination of the relation between algebraic coding and network topologies. We mathematically establish a series of results along the direction of: if network coding is necessary/beneficial, or if a particular finite field is required for coding, then the network must have a corresponding hidden structure embedded in its underlying topology, and such embedding is computationally efficient to verify. Specifically, we first formulate a meta-conjecture, the NC-Minor Conjecture, that articulates such a connection between graph theory and network coding, in the language of graph minors. We next prove that the NC-Minor Conjecture is almost equivalent to the Hadwiger Conjecture, which connects graph minors with graph coloring. Such equivalence implies the existence of $K_4$, $K_5$, $K_6$, and $K_{O(q/\log{q})}$ minors, for networks requiring $\mathbb{F}_3$, $\mathbb{F}_4$, $\mathbb{F}_5$ and $\mathbb{F}_q$, respectively. We finally prove that network coding can make a difference from routing only if the network contains a $K_4$ minor, and this minor containment result is tight. Practical implications of the above results are discussed.
\end{abstract}
\begin{IEEEkeywords}
Network coding, multicast, graph minor, treewidth.
\end{IEEEkeywords}
\section{Introduction}
Network coding \cite{nc} is a technique that encourages information coding across a communication network, at relay nodes as well as at terminals.
Compared with routing, network coding in general can augment the capacity of a network, especially for one-to-many multicast data dissemination \cite{nc,Li09aconstant}. Li {\em et al.} \cite{linearnc} proved that for a multicast session, symbol-wise linear algebraic coding over a finite field is always sufficient. Fundamental questions on network coding include: when/whether is it necessary, how much benefit (throughput gain or cost reduction) does it bring over routing, how to perform code assignment across the network (including over which field), and how much is the encoding/decoding overhead. The answers to these questions often closely depend on the underlying structure of the network topology --- after all, as evident in its name, network coding is coding performed {\em within a network}.

During the past twelve years, a plethora of results have been obtained on the theory of network coding, leading to advanced understandings of the subject, especially for the single source case.  Existing work usually approaches network coding from an algebraic or information theoretic perspective, and treats the graph topology of a network as a black box. Latest results suggest that a close examination of the network structure and exploiting in-depth connections between graph theory and network coding may lead to new understandings on when and how network coding should be performed. For example, while previous research suggest that the necessary field size grows with the number of receivers and has no finite bound \cite{polynomial_nc}, coding over very small finite fields suffices for networks exhibiting a planar or close-to-planar topology \cite{Xiahou:NCinPlanar}.

\begin{figure}[!htbp]
\begin{center}
\includegraphics[width = 0.45\textwidth]{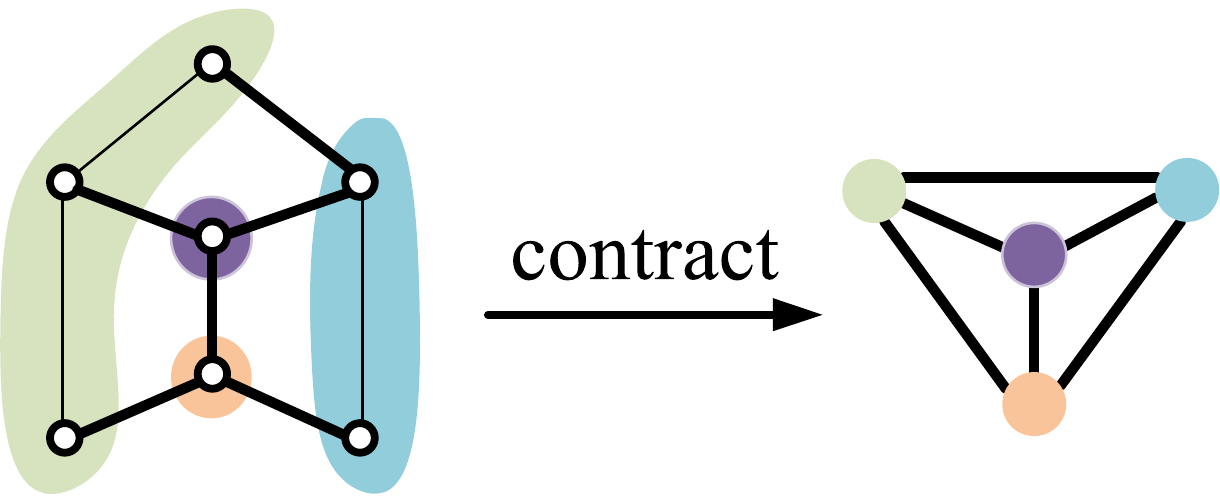}
\caption{A $K_4$ minor (right) in the butterfly network (left), a well-known network topology in which network coding outperforms routing (tree packing) \cite{nc}. $K_4$ denotes a $4$-node complete graph. Later Theorem \ref{thm:noNCinK4free} shows that every multicast network in which network coding outperforms routing must contain a $K_4$ minor.} \label{fig:k4butterfly}
\end{center}
\end{figure}

This work aims at an in-depth examination of the interplay between algebraic coding and graph theory, in the context of network coding. Our goal is to identify the underlying connections between (i) signatures of a network topology, in the form of embedded graph patterns, and (ii) the necessity, benefit, and complexity of network coding. The tool of graph minors is known to be powerful to relate abstract graph properties with embedded graph structures \cite{graph}.
A celebrated example is
Kuratowski's Theorem
that states {\em a graph is planar if and only if it does not contain a $K_5$ or $K_{3,3}$ minor} \cite{graph}.  As illustrated in Fig.~\ref{fig:k4butterfly}, an ``embedded sub-graph structure'', or a {\em graph minor} $M$ of a graph $G$ is obtained by deleting and/or contracting a subset of edges in $G$ (formal definition in ~\ref{sec:model-minor}). Intuitively, a node in the minor graph $M$ corresponds to a component in $G$ that has been contracted into a ``super-node''. For example, in the context of the Internet, the minor topology can be thought of as the overlay topology over subnetworks, ASes and ISPs \cite{Oliveira:2007:asTopo}.

Throughout this work, we prove a series of results along the direction of: if network coding, or coding over a certain finite field, is necessary in a multicast network $G$, then $G$ must have a corresponding graph minor embedded in its topology. As shown in Fig.~\ref{fig:structure}, lying at the center of this work is a meta-conjecture we propose, the NC-Minor Conjecture, which connects network coding with graph minors. The NC-Minor Conjecture states that if a multicast network $G$ requires coding over the finite field $\mathbb{F}_q$, then $G$ must contain a $K_{f(q)}$ minor, for a function $f(q)$ non-decreasing in $q$, which implies the field size is bounded by the size of maximum clique minor contained in the network topology.

\begin{figure}[!htbp]
\begin{center}
\includegraphics[width = 0.5\textwidth]{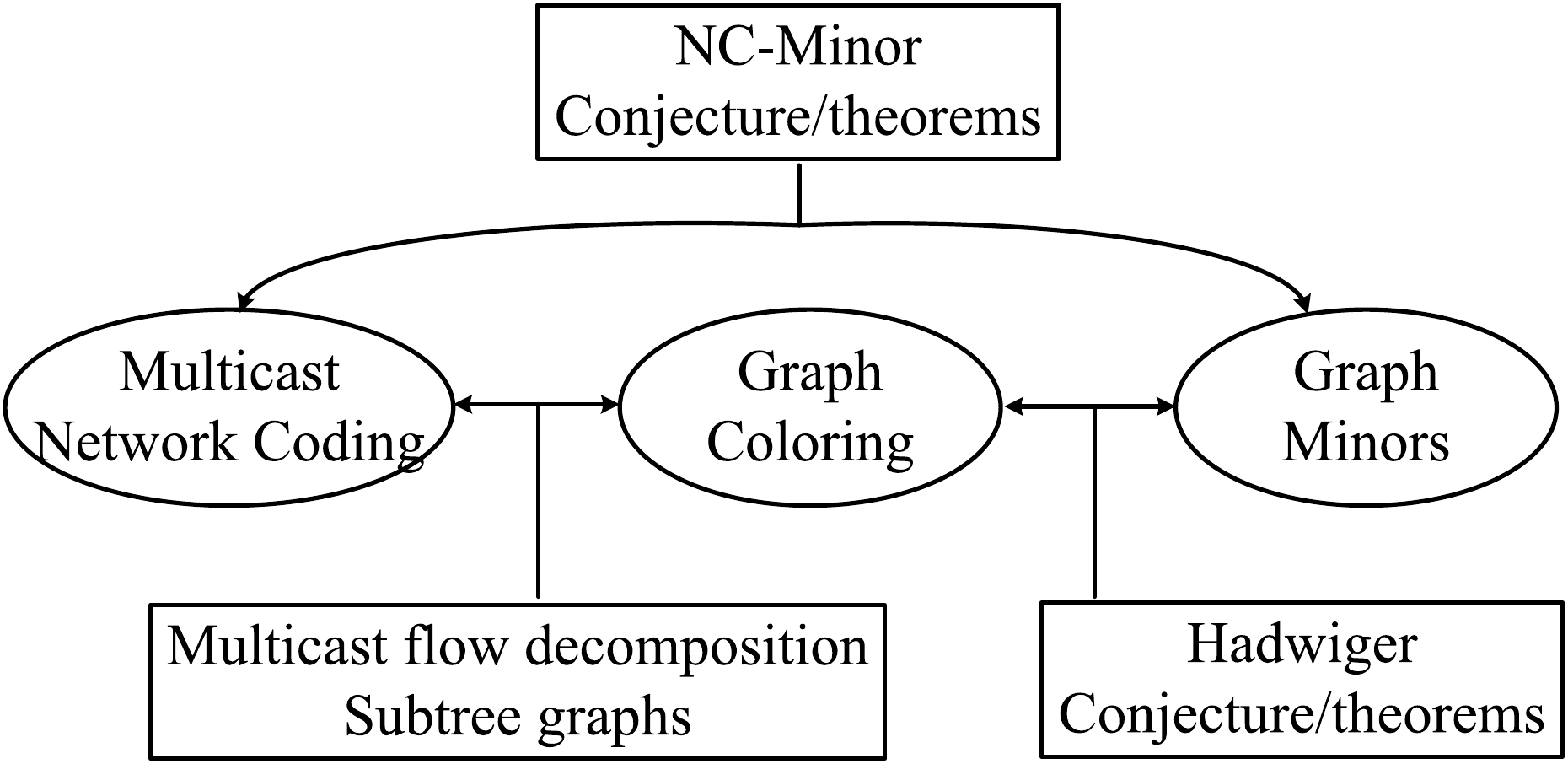}
\caption{A structural illustration of the techniques and results.}
\label{fig:structure}
\end{center}
\end{figure}

To study this conjecture, we focus on the basic scenario of multicast, where the source has two information flows to disseminate. A multicast network $G$ is {\em $2$-minimal} if a multicast rate $2$ is feasible in $G$ but infeasible with any edge in $G$ removed \cite{NCinMinimal}. $2$-minimal networks are easy to analyze, yet fundamental: they lead to the largest known throughput gap between network coding and routing \cite{cnc}, and require a full-fledged suite of coding operations based on unbounded field sizes, for unlimited number of receivers \cite{polynomial_nc,cnc}.

We relate the NC-Minor Conjecture to the Hadwiger Conjecture, through techniques of multicast flow decomposition and graph coloring. The Hadwiger Conjecture \cite{graph} states that {\em if a graph $G$ requires $q$ colors for proper coloring (no two neighboring vertices share a common color), then $G$ contains a $K_q$ minor}. It is viewed as ``one of the most important open problems in graph theory'' \cite{barat-joret} and ``one of the deepest unsolved problems in graph theory'' \cite{hadwigerConj}. In particular, we apply the technique of multicast flow decomposition \cite{flow_dec} that transfers the code assignment problem in a multicast network $G$ into a coloring problem in the {\em subtree graph} of $G$. Through proving that every graph $H$ is a possible subtree graph of some multicast network $G$, we show that the NC-Minor Conjecture implies the Weak Hadwiger Conjecture. Through transforming graph minor sizes to chromatic numbers in the subtree graph and then to the field size of the multicast network, we prove that the Hadwiger Conjecture implies the NC-Minor Conjecture.

Combing the pseudo-equivalence between the two conjectures with the rich body of work on the Hadwiger Conjecture, we obtain a number of results of interest: if a multicast network $G$ requires coding over $\mathbb{F}_3$, $\mathbb{F}_4$, $\mathbb{F}_5$, or $\mathbb{F}_q$, then $G$ contains a $K_4$, $K_5$ and $K_6$, or $K_{O(q/\log{q})}$ minor, respectively. This implies, for example, that coding over $\mathbb{F}_3$ is sufficient for a $K_5$-free network. Combined with Kuratowski's Theorem \cite{graph}, this further implies that coding over $\mathbb{F}_3$ suffices for all planar networks. While there exist proofs for the latter result that exploit planarity of the network \cite{Xiahou:NCinPlanar,NCinMinimal}, our result reveals that whether planarity holds is actually not important, and planar networks enjoy the sufficiency of $\mathbb{F}_3$ because they form a special class of $K_5$-free networks. Our result also reveals that the {\em de facto} standard of using $\mathbb{F}_{2^8}$ and $\mathbb{F}_{2^{16}}$ in network coding implementations is an overkill, in the sense that no conceivable real-world network can have a so large clique minor. 

Note that to say a multicast network $G$ requires coding over $GF(2)$ is equivalent to say network coding is necessary, or can outperform routing, in  $G$.  The relation between the NC-Minor Conjecture and Hadwiger Conjecture implies that {\em if network coding is necessary, then $G$ has a $K_3$ minor}. However, examples in the literature that differentiate network coding from routing contain not only $K_3$ but $K_4$ minors. The second half of this paper is devoted to a proof to the stronger result: {\em if network coding can outperform routing in a network $G$, then $G$ must contain a $K_4$ minor}. Here we drop the $2$-minimal network assumption and prove the statement for multicast networks with arbitrary throughput. In the proof, we apply a new type of tree decomposition based on the concept of treewidth, a fundamental tool from graph theory. To our knowledge, this is the first time that the treewidth approach is applied in network coding theory.

With the $K_4$-minor signature of network coding, we conclude that network coding is not necessary in series-parallel networks and outerplanar networks. We also show that this result is tight, in that it becomes incorrect if $K_4$ is replaced with any other non-trivially more complex topology.
Table~\ref{table:summary} summarizes the main results proved in this paper.

\begin{table}[!htbp]
\centering \caption{Summary of Results}\label{table:summary}
\begin{tabular}{|c|c|c|}
\hline
 Topology Property & Min Field Size &  Example Networks\\
\hline\hline
 $K_3$-minor-free & Routing is sufficient & Stars, Trees, Forests\\
\hline
 $K_4$-minor-free & Routing is sufficient & Series parallel networks,  Outerplanar networks \\
\hline
 $K_5$-minor-free & $\leq 3$ & Planar networks\\
\hline
 $K_6$-minor-free & $\leq 4$ & Apex networks\\
\hline
 $K_q$-minor-free & $O(q\log{q})$ & \\
\hline
\end{tabular}
\end{table}

In the rest of the paper, Sec. II reviews related work, Sec. III presents the model and preliminaries. Sec. IV is on the NC-Minor Conjecture, Sec. V is on the equivalence of network coding and routing in $K_4$-minor-free networks. Sec. VI concludes the paper.


\section{Related Work}
Koetter and M\'edard used an algebraic approach to upper-bound the required field size \cite{algebraic:Koetter} for multicast network coding. Their bound was subsequently improved, to the result that {\em a finite field is always sufficient if its size is at least the number of multicast receivers $k$} \cite{Jaggi:ISIT03:lowcomplexity} \cite{Sanders:2003:PTA} \cite{Ho:random:ISIT03}. For the case of $2$-minimal networks, Fragouli and Soljanin \cite{flow_dec} show that a tighter upper-bound on the sufficient field size can be proved for $2$-minimal networks at $\sqrt{2k-7/4}+1/2$. These growing bounds contrast with the small fields that we prove sufficient for minor-forbidden networks.

Two concurrent work also examine the connection between algebraic coding and network topologies. (1) Ebrahimi and Fragouli \cite{Ebrahimi:isit2012} investigate such a connection using an algebraic approach. Based on the algebraic framework due to Koetter and M\'edard \cite{algebraic:Koetter}, they scrutinize the {\em network polynomial} that is used for multicast code assignment. The goal is to understand what structures in the network lead to which type of monomials in the network polynomial, and hence to bound the necessary field size by bounding the highest degree of the monomials. (2) Xiahou {\em et al.} \cite{Xiahou:NCinPlanar} investigate such a connection using a graph coloring approach, in planar and pseudo planar networks, and special types of planar networks where all relays or all terminals appear on a common face. Their work is complementary to ours in that they design efficient network code assignment algorithms over small fields, while our work proves the sufficiency of small fields in more general types of networks.


For special network models where network coding is equivalent to routing,  results on spanning tree packing \cite{Edmonds73edgedisjoint} imply that network coding is unnecessary for one-to-all broadcast. Yin {\em et al.} showed that routing is sufficient in bidirected networks that are have balanced link capacities or node capacities \cite{Yin:2012:bidirected}. The model of Peer-to-Peer networks where bandwidth bottleneck lies at the last-hop only does not require network coding either \cite{CanNCHelpP2P}. In comparison, we prove that a network without a $K_4$ minor does not require network coding.

\section{Network Model and Preliminaries}

\subsection{Network Model and Basic Definitions}
A multicast network is represented by a directed multigraph $D(V,A)$
with source node $s\in V$ and receiver set $T\subset V-\{s\}$. Each
link $e\in A$ has the same unit capacity. We use $c(u,v)$ to denote the multiplicity of directed links, {\em i.e.}, the integrated link capacity, from $u$ to $v$.
Let $\lambda(u)$ denote the max-flow from $s$ to node $u$, then according to a celebrated result in network coding theory \cite{linearnc}, the maximum multicast rate with network coding is $\min_{t\in T} \lambda(t)$.

In a general cyclic network, linear algebraic codes may not suffice, and linear convolutional codes are required \cite{Li:2011:commutative}, for achieving the optimal multicast rate. Coding coefficients in a convolutional code are not necessarily from a finite field. Therefore, when the minimum required field size is concerned, we assume that the network is acyclic (this assumption can be relaxed to one that says the network has a linear algebraic solution. Some cyclic networks admit static algebraic coding, {\em e.g.}, Fig.~\ref{fig:minK4example}). When we study the necessity of network coding in $K_4$-minor-free networks, we do not need the acyclic network assumption.

The topology of a network $D(V,A)$ is its undirected underlying simple graph $G(V,E)$, obtained by ignoring the orientation of each link in $A$ and merging the duplicated edges. Some concepts used in this paper, such as graph minor, are originally defined for undirected graphs in the graph theory literature. For ease of presentation, we do not explicitly distinguish a directed network with its underlying undirected graph whenever the meaning is clear from the context. In a directed graph, we also use the term {\em degree} to refer to the sum of a node's in-degree and out-degree.

\subsection{Graph Minors}
\label{sec:model-minor}
In graph theory, {\em graph minors} extend the concept of {\em subgraphs}. Both are useful in modeling the fundamental containment relation between graphs, and the former has a less restrictive definition. While a subgraph is the output of a series of edge removals performed on the original graph $G$, a graph minor $M$ is the output of a series of edge removals and edge contractions applied on $G$. A {\em contraction} of an edge $uv$ removes that edge and combines $u$ and $v$ into a new vertex, with their neighbor sets merged, as illustrated in Fig.~\ref{fig:contractIls}.
\begin{figure}[!htbp]
\centering
  \includegraphics[width=0.65\textwidth]{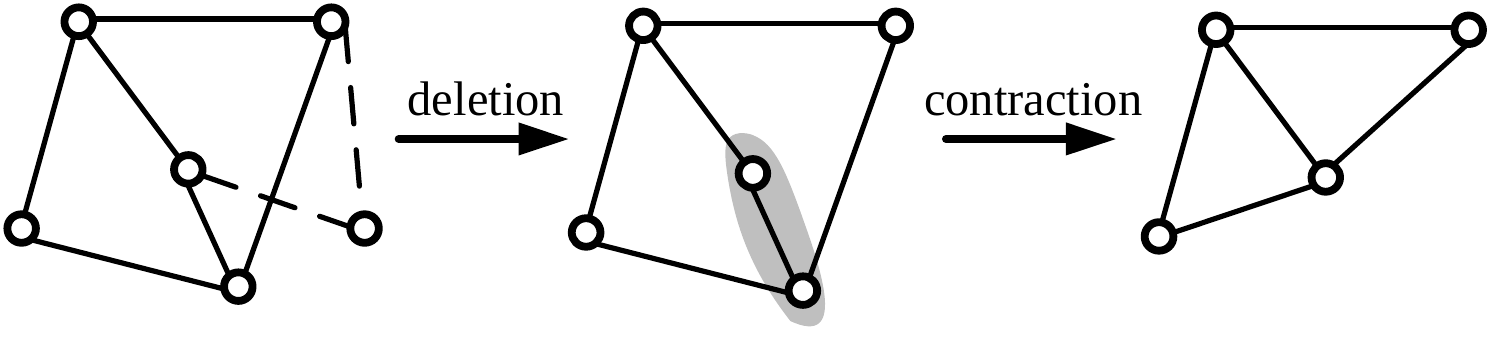}\\
  \caption{Edge contraction and edge deletion are the two operations employed in the definition of graph minors.}\label{fig:contractIls}
\end{figure}

Intuitively, after a sequence of graph minor operations, a node in the resulting minor graph $M$ corresponds to a connected component in the original graph $G$, and an edge in the minor $M$ corresponds to an edge connecting the two corresponding components in $G$.

Many types of graphs can be characterized by their excluding minors. For example, trees are connected $K_3$-minor-free graphs, and a graph is planar if and only if it does not contain a $K_5$ or $K_{3,3}$ minor.
From the perspective of efficient algorithm design, testing whether a graph $G = (V, E)$ contains a fixed graph $M$ as a minor can be done efficiently, in $O(|V|^3)$ time \cite{Robertson:1995:minorTest}.

\subsection{The Hadwiger Conjecture}
Arguably the most important open problem in graph theory \cite{barat-joret}\cite{hadwigerConj}, the Hadwiger Conjecture due to Hugo Hadwiger in 1943 is a well known proposition with far reaching consequences, characterizing the necessity of given chromatic numbers by proposing as their fundamental cause the corresponding embedded subgraph structures (the graph minors). The conjecture has been proved for a number of special cases, including the celebrated Four Color Theorem \cite{graph}, but remains open in its general form.

\vspace{2mm} \noindent{\bf The Hadwiger Conjecture.} {\em Every $q$-chromatic graph contains a $K_q$ minor.}

\vspace{2mm}\noindent Here $K_q$ is the complete graph over $q$ nodes. A graph $G$ is {\em $q$-chromatic}, or has a chromatic number $q$, if $q$ is the minimum number of colors required in a proper coloring of $G$, in which adjacent nodes are always assigned distinct colors. The case $q=5$ of the Hadwiger Conjecture implies the Four Color Theorem that states every planar graph has a chromatic number at most $4$, since planar graphs do not contain either $K_5$ or $K_{3,3}$ minors \cite{graph}. 

In fact, for all $q \le 6$, the Hadwiger Conjecture has been proven to be true \cite{Robertson93hadwiger}. For a large general value of $q$,
the best result known is that every $q$-chromatic graph contains a clique minor of $O(q/\log{q})$ nodes \cite{lowerBoundofHadwigerNumber}.

Therefore, a weak form of the conjecture still remains open: every $q$-chromatic graph contains $K_{\lfloor cq \rfloor}$ as a minor, where $c$ is any
constant smaller than $1$. For connecting the NC-Minor Conjecture with the Hadwiger Conjecture, we also consider the following statement that is stronger than these weak forms:

\vspace{2mm} \noindent{\bf The Weak Hadwiger Conjecture.} {\em Every $q$-chromatic graph contains a $K_{q-1}$ minor.}
\vspace{2mm}

\section{The NC-Minor Conjecture and Its Equivalence to The Hadwiger Conjecture}
In this section, we focus on the basic scenario of multicast, with two source flows, which has also been a subject of study in a number of recent work in the network coding literature \cite{Xiahou:NCinPlanar,NCinMinimal,flow_dec}.  In this case, it is natural to focus on $2$-minimal networks, which can deliver two flows to all the receivers but not with any of its links removed. A multicast network in which two information flows are multicasted always contains a two-minimal subnetwork; furthermore, if the latter contains a certain minor $M$, so does the former.

We next propose the NC-Minor Conjecture (\ref{sec:nc-minor}), and show that it is almost equivalent to the Hadwiger Conjecture (\ref{sec:equivalence}) through techniques of subtree decomposition and subtree graphs (\ref{sec:subtree}). Based on such an equivalence and existing research on the Hadwiger Conjecture, we show that the NC-Minor Conjecture is correct in a loose, general sense, and derive interesting corollaries that characterize the sufficiency of small finite fields in small-minor-forbidden networks. Some of these corollaries generalize existing results proven in the network coding literature, and some are new. At the end of this section, we identify sufficient conditions for the NC-Minor Conjecture to hold (\ref{sec:conditions}).

\subsection{The NC-Minor Conjecture}
\label{sec:nc-minor}
The NC-Minor Conjecture originates from the intuitive observation that, in order to enforce coding over a large field, rich edge connections are required in a multicast network, as evident in combination networks \cite{cnc} and in $ZK$ networks \cite{Chekuri06onaverage}, classic examples where network coding outperforms tree packing. From a graph minor point of view, it is often possible to identify highly inter-connected graph components (clique minors) within a network of rich edge connections. The aforementioned intuition, written in the graph minor language, is then: {\em for every multicast network that does not contain a $K_{q}$ minor, coding over the finite field $\mathbb{F}_{f(q)}$ is sufficient, where $f(q)$ is a monotonic function non-decreasing in $q$.}


It remains to be determined how small a $f(q)$ can we claim, such that the conjecture can still hold. For example, general planar networks, which are $K_5$-free but not $K_4$-free, require coding over $\mathbb{F}_{3}$ \cite{Xiahou:NCinPlanar}. Based on such observations on known network types, and the fact that the size of a finite field is always a prime power, we formulate the following statement that is the strongest possible:

\vspace{2mm} \noindent{\bf The NC-Minor Conjecture.} {\em If a multicast network $G$ does not contain a $K_{q+2}$ minor, then coding over the finite field $\mathbb{F}_{f(q)}$ is sufficient to achieve optimal throughput in $G$, where $f(q)$ is the smallest prime power no less than $q$.}

\vspace{0mm}

\subsection{Multicast Flow Decomposition and Subtree Graphs}
\label{sec:subtree}
We prepare to establish the equivalence between the NC-Minor Conjecture and the Hadwiger Conjecture by introducing some useful tools from the literature of network coding, for manipulating a multicast flow and connecting the code assignment problem to graph coloring.

The technique of information flow decomposition was first proposed by Fragouli and Soljannin \cite{flow_dec}. Given a $2$-minimal network $G$, we may decompose it into subtrees by repeatedly extracting a subtree in the following way:
\begin{enumerate}
\item Start from a link leaving either $s$ or another node with in-degree $2$,
\item If there is a link $(u,v)$, where $u$ is a non-root node in the current subtree and the in-degree of $u$ is $1$, add the link into the subtree.
\item Repeat step 2 until no more links can be added.
\end{enumerate}
As $G$ is link minimal, links extracted in this way can not enter a node twice. Hence all the extracted links form a tree. It can further be verified from the construction that the decomposition is unique.

\begin{figure}
\centering
  \includegraphics[width=0.5\textwidth]{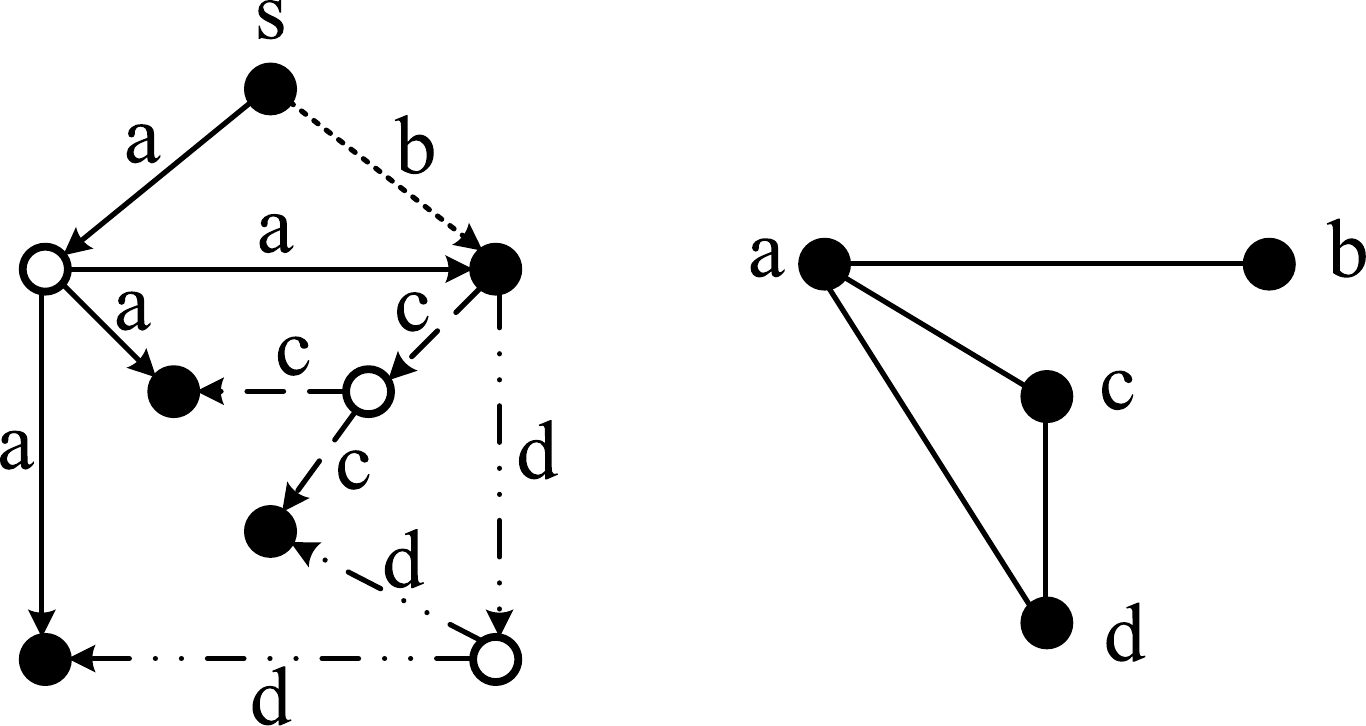}\\
  \caption{A $2$-minimal network and its subtree graph.
  }\label{fig:subtree}
\end{figure}

Note that flows transmitted in one subtree must be the same, {\em i.e.}, a flow propagates within a subtree without changes. If a receiver obtains information from two subtrees, these two subtrees must contain different flows, for a coding scheme to be valid. The {\em subtree graph} is introduced to model this constraint: a node is created for each subtree, and two nodes are connected (``interfere'' with each other) if the two corresponding subtrees share a common leaf node.

According to a study on minimal networks \cite{NCinMinimal}, the in-degree of a node is no more than two in a $2$-minimal network. We refer to a node of in-degree 1 as a {\em Steiner node}. For a $2$-minimal network $G$, let $H$ denote the (undirected) subtree graph. The following properties hold:
\begin{itemize}
\item Nodes of in-degree 2 in $G$ can be regarded as receivers, since $G$ is link minimal, these nodes must receive 2 information flows.
\item The degree of each node in $H$ is no more than the number of leaves of the corresponding subtree, since each leaf is contained in exactly two subtrees, which introduces an edge in $H$.
\end{itemize}

The following two lemmas establish the relationship between the minimum required field size and the chromatic number of the subtree graph.
\begin{lemma}\label{lem:colorLeqFieldSize}
If there is a coding solution over finite field $\mathbb{F}_{q}$, the subtree graph can be colored with $q+1$ colors.
\end{lemma}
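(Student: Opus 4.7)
The plan is to color each node of the subtree graph $H$ by the one-dimensional subspace of $\mathbb{F}_q^2$ spanned by the coding vector carried on the corresponding subtree, and to verify that (i) this assignment uses at most $q+1$ colors and (ii) it is a proper coloring of $H$.

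First I would record the structural property of the decomposition that drives the argument: inside a single subtree, every non-root vertex has in-degree exactly $1$, so the symbol on each outgoing edge is a scalar multiple (in $\mathbb{F}_q$) of the symbol on the unique incoming edge. Consequently, if $\mathbf{w}\in\mathbb{F}_q^2$ is the coding vector on the root edge of a subtree, then every edge of that subtree carries a scalar multiple of $\mathbf{w}$, and in particular all edges of the subtree share the same one-dimensional subspace $\langle\mathbf{w}\rangle\subseteq\mathbb{F}_q^2$. Moreover $\mathbf{w}\neq\mathbf{0}$, because otherwise the subtree contributes no information and the link-minimality of $G$ would be violated at the root edge. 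Thus each subtree is canonically labelled by an element of the projective line $\mathbb{P}^1(\mathbb{F}_q)$, which has $\frac{q^2-1}{q-1}=q+1$ elements; this gives the required palette of $q+1$ colors.

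Next I would verify that adjacent nodes in $H$ receive distinct colors. Two subtrees are adjacent in $H$ precisely when they share a common leaf $t$; by the properties of the decomposition noted in the excerpt, $t$ has in-degree $2$ in $G$ and must therefore be treated as a receiver demanding both source flows. For the coding solution to be valid at $t$, the two incoming coding vectors (one from each of the two subtrees containing $t$ as a leaf) must be linearly independent over $\mathbb{F}_q$. Linearly independent nonzero vectors in $\mathbb{F}_q^2$ span different one-dimensional subspaces, so the two subtrees receive different colors. This shows the coloring is proper and completes the proof.

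I do not anticipate a serious obstacle: the argument is essentially the observation that a valid $\mathbb{F}_q$-solution induces a map from subtrees to $\mathbb{P}^1(\mathbb{F}_q)$ that separates interfering subtrees. The only mildly delicate point is justifying that a single projective class is well defined per subtree (rather than one per edge); this is handled by the in-degree-$1$ property of non-root vertices inside a subtree, which forces all edge vectors to be colinear. Everything else is a direct translation of ``decodability at an in-degree-$2$ receiver'' into ``linear independence of the two incident coding vectors,'' which is the standard receiver condition in linear network coding.
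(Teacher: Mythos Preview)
Your proposal is correct and follows essentially the same approach as the paper: both identify a subtree with the one-dimensional subspace of $\mathbb{F}_q^2$ spanned by its coding vector (the paper lists the $q+1$ explicit representatives $\{(0,1),(1,0),(1,\alpha^0),\ldots,(1,\alpha^{q-2})\}$, while you use the projective line $\mathbb{P}^1(\mathbb{F}_q)$ abstractly), and both argue that adjacent subtrees must carry linearly independent vectors. The only minor difference is in the last step: you invoke the receiver interpretation of in-degree-$2$ nodes to force linear independence, whereas the paper argues directly from $2$-minimality (if the two incoming vectors at the common leaf were colinear, one incoming link could be deleted without affecting the solution); since the receiver interpretation you cite is itself justified by exactly this minimality argument, the two proofs are essentially the same.
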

\begin{proof}
For two source flows, the encoding vectors are chosen from $\mathbb{F}_{q}\times \mathbb{F}_{q}$. Each coding vector must be linearly dependent with one of the following vectors $\{(0,1),(1,0),(1,1=\alpha^0), (1,\alpha), \cdots,
(1,\alpha^{q-2})\}$, where $\alpha$ is a primitive element of $\mathbb{F}_{q}$. So there is a feasible coding solution with the $q+1$ vectors listed above. Color the subtree graph with $q+1$ colors according to its coding vector in the feasible coding solution. Adjacent subtrees must have different colors, since otherwise we can remove one of the two incoming links of their common leaf without affecting the coding solution, which conflicts with the fact that the network is $2$-minimal.
\end{proof}

\begin{lemma}\label{lem:fieldSizeLeqColor}
If the subtree graph $H$ can be colored with $q+1$ colors, there exists a coding solution over finite field $\mathbb{F}_{q}$ (for $q$ being a prime power).
\end{lemma}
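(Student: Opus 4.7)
The plan is to prove the converse of Lemma~\ref{lem:colorLeqFieldSize} by exhibiting, from a proper $(q+1)$-coloring of $H$, an explicit linear coding solution over $\mathbb{F}_q$. The starting observation is that $\mathbb{P}^1(\mathbb{F}_q)$ contains exactly $q+1$ points, so the set
\[
\mathcal{V}=\{(0,1),(1,0),(1,1),(1,\alpha),\ldots,(1,\alpha^{q-2})\}\subseteq \mathbb{F}_q^2
\]
used in Lemma~\ref{lem:colorLeqFieldSize} has exactly $q+1$ elements and is pairwise linearly independent (every pair spans $\mathbb{F}_q^2$). Any bijection between the color palette and $\mathcal{V}$ thus assigns to each subtree $T$ of the flow decomposition a coding vector $v(T)\in\mathcal{V}$ with the property that adjacent subtrees of $H$ always receive linearly independent vectors.

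Given this assignment, I would define the code by transmitting on every edge of subtree $T$ the symbol $v(T)\cdot(x_1,x_2)^{\top}$, where $x_1,x_2$ are the two source flows, and then verify that each node can physically produce its outgoing symbols. A subtree rooted at $s$ is trivial, since $s$ holds both $x_1$ and $x_2$. For a subtree $T_3$ rooted at an in-degree-$2$ node $w$, the two incoming edges of $w$ lie in two distinct subtrees $T_1,T_2$ that both terminate at $w$ as a leaf; hence $T_1$ and $T_2$ are adjacent in $H$ and receive linearly independent coding vectors under our assignment. These two vectors span $\mathbb{F}_q^2$, so there exist $c_1,c_2\in\mathbb{F}_q$ with $c_1 v(T_1)+c_2 v(T_2)=v(T_3)$, and $w$ can produce the required outgoing symbol as $c_1 s_1+c_2 s_2$, where $s_i$ is the incoming symbol carried by $T_i$. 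Decodability at each receiver $t$ follows by exactly the same argument: $t$ has in-degree $2$, its two incoming edges come from two subtrees adjacent in $H$, their vectors are linearly independent, and the resulting $2\times 2$ system is invertible, so $t$ recovers $(x_1,x_2)$.

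The step that requires the most care is the structural claim that the two incoming edges at every in-degree-$2$ node $w$ lie in \emph{two distinct} subtrees, each with $w$ as a leaf. This follows from the greedy decomposition rule ``extend only through in-degree-$1$ nodes'': if both incoming edges were inside a single subtree $T$, then $w$ would have in-degree $2$ inside $T$, contradicting the fact noted in the excerpt that every extracted subtree is indeed a tree; and since $w$ has in-degree $2$ in $G$, the rule never extends a subtree past $w$, so each of the two incoming subtrees stops at $w$ as a leaf, which places an edge between them in $H$. Once this structural point is pinned down, the remainder reduces to the equality $|\mathcal{V}|=|\mathbb{P}^1(\mathbb{F}_q)|=q+1$ and routine linear algebra over $\mathbb{F}_q$.
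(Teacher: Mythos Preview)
Your proposal is correct and follows essentially the same approach as the paper: both map the $q+1$ colors bijectively to the set $\mathcal{V}=\{(0,1),(1,0),(1,1),(1,\alpha),\ldots,(1,\alpha^{q-2})\}$ of pairwise independent vectors, assign each subtree its color's vector, and argue feasibility from the fact that an in-degree-$2$ root sees two adjacent subtrees whose vectors span $\mathbb{F}_q^2$. The only stylistic difference is that the paper explicitly invokes the acyclicity assumption and a topological ordering of subtrees to justify that the local encoding coefficients are well-defined, whereas you fix the global encoding vectors up front and then check local realizability; these are equivalent in the acyclic setting, and your extra care in verifying the structural claim about the two incoming subtrees at an in-degree-$2$ node is a welcome elaboration of a point the paper takes for granted.
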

\begin{proof}
For any $q+1$ colors, let each color corresponds to a unique encoding
vector from $\{(0,1),(1,0),(1,1=\alpha^0), (1,\alpha), \cdots,
(1,\alpha^{q-2})\}$, where $\alpha$ is a primitive element of a finite field $\mathbb{F}_{q}$. Note that any two vectors from this set are linearly independent.

As we assume the multicast network to be acyclic in this section, we may assign the encoding vector to each subtree in a topological ordering, such that at the time of assigning encoding vector for a subtree rooted at $v$, all the coding vectors of links entering $v$ have been determined.

Referring to a feasible coloring of $H$ with $q+1$ colors, we assign each subtree with the encoding vector corresponding to its color. Such a code is feasible since each subtree is rooted at either the source or a node of in-degree $2$. For the latter case, the root appears in two subtrees adjacent in $H$, and therefore the encoding vectors on these two incoming edges are linearly independent. Thus the encoding vector can be generated as a linear combination of vectors on the incoming edges.
\end{proof}

\subsection{Equivalence to The Hadwiger Conjecture}
\label{sec:equivalence}

For simplicity, we consider the cases of $q$ being a prime power first. General values of $q$ are characterized in Fig.~\ref{fig:conjecturevs}.
\begin{figure}[!htbp]
\begin{center}
\includegraphics[width = 0.5\textwidth]{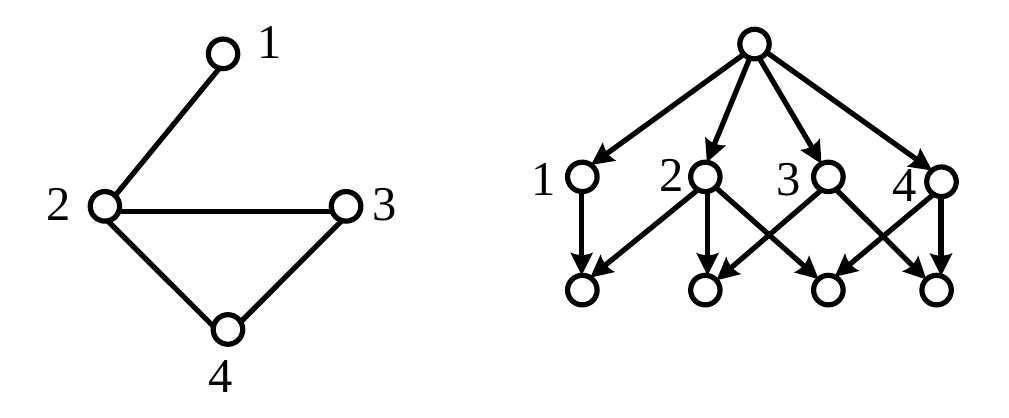}
\caption{The $2$-minimal network for a given subtree graph.} \label{fig:exampleforanyH}
\end{center}
\end{figure}

\begin{theorem} \label{thm:constructGwithH}
For any simple graph $H$, there is a $2$-minimal network
with $H$ as its subtree graph.
\end{theorem}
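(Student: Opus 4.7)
The natural plan is to give an explicit construction that realizes any given simple graph $H$ as a subtree graph. Write $V(H)=\{v_1,\dots,v_n\}$ and build a directed acyclic network $G$ as follows. Take a single source $s$; for each vertex $v_i$ introduce an internal node $\rho_i$ together with a directed link $s\to\rho_i$; for each edge $e=\{v_i,v_j\}\in E(H)$ introduce a node $\ell_e$ together with the two directed links $\rho_i\to\ell_e$ and $\rho_j\to\ell_e$. The nodes of in-degree $2$ in $G$ are exactly the $\ell_e$, and these serve as the receivers.

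First I would verify that the subtree decomposition procedure defined above produces exactly one subtree per vertex of $H$ and matches the intended adjacencies. Since each $\ell_e$ has no outgoing link, the only admissible starting links are the $n$ edges $s\to\rho_i$, producing $n$ distinct subtrees $T_1,\dots,T_n$. Inside $T_i$, the node $\rho_i$ is non-root and has in-degree $1$, so all of its outgoing links $\rho_i\to\ell_e$ (for every $e$ incident to $v_i$) are absorbed into $T_i$; each such $\ell_e$ then has in-degree $2$ and terminates the subtree at that leaf. Consequently the in-degree-$2$ leaves of $T_i$ are exactly $\{\ell_e:e\ni v_i\}$, and two subtrees $T_i,T_j$ share such a leaf iff $\{v_i,v_j\}\in E(H)$. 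The subtree graph of $G$ is therefore isomorphic to $H$.

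Next I would verify 2-minimality. For every receiver $\ell_e$ with $e=\{v_i,v_j\}$, the two internally disjoint paths $s\to\rho_i\to\ell_e$ and $s\to\rho_j\to\ell_e$ give $\lambda(\ell_e)\ge 2$, while the two incoming links of $\ell_e$ form a cut of size $2$, so $\lambda(\ell_e)=2$ and rate-$2$ multicast is feasible. For minimality, deleting $\rho_i\to\ell_e$ reduces $\ell_e$'s in-degree to $1$, while deleting $s\to\rho_i$ disconnects $\rho_i$ from $s$ and drops $\lambda(\ell_e)$ to $1$ for every $e\ni v_i$. Hence every link is critical for at least one receiver, so $G$ is 2-minimal.

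The main obstacle I expect is the case of an isolated vertex $v_i$ of $H$: the link $s\to\rho_i$ would then serve no receiver, and in fact no 2-minimal multicast network can have an isolated vertex in its subtree graph (an isolated subtree has no in-degree-$2$ leaves, so its links cannot be critical). The statement should therefore be understood for $H$ without isolated vertices, which is all that the subsequent Hadwiger-style applications require, since chromatically nontrivial graphs have edges. Alternatively, one can attach a shared dummy leaf between $\rho_i$ and one extra auxiliary subtree to absorb each isolated vertex in a controlled way, which does not affect the equivalence with the Hadwiger Conjecture.
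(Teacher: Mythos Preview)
Your construction is exactly the one the paper uses: one relay per vertex of $H$ linked directly to the source, and one receiver per edge of $H$ linked to the two corresponding relays; the paper then simply asserts 2-minimality and that the subtree graph equals $H$, whereas you spell out the verification. Your observation about isolated vertices is a genuine edge case that the paper silently ignores; your proposed resolution (restrict to graphs without isolated vertices, which loses nothing for the Hadwiger application) is the right fix.
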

\begin{proof}
For each node in $H$, we create a relay node directly connected to
the source. For any edge in $H$, we create a receiver connected to
the two relays that correspond to the two adjacent nodes in $H$, as illustrated in Fig.~\ref{fig:exampleforanyH}.
Such a network is $2$-minimal, and its subtree graph is $H$.
\end{proof}

\begin{theorem}\label{thm:NCtoWeakHadwiger}
({\bf NC-Minor Conjecture $\Longrightarrow$ Weak Hadwiger Conjecture.}) If coding over finite field $\mathbb{F}_q$ is always sufficient in any $K_{q+2}$-minor-free network, then every $(q+2)$-chromatic graph contains $K_{q+1}$ as a minor. 
\end{theorem}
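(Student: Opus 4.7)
The plan is to argue by contradiction: assume the NC-Minor Conjecture as stated, and take for contradiction a $(q+2)$-chromatic graph $H$ containing no $K_{q+1}$ minor. Using Theorem \ref{thm:constructGwithH} I would construct a $2$-minimal multicast network $G$ whose subtree graph is $H$. Since $\chi(H)=q+2$, $H$ admits no proper coloring with $q+1$ colors, so by the contrapositive of Lemma \ref{lem:colorLeqFieldSize} there is no coding solution for $G$ over $\mathbb{F}_q$. Applying the contrapositive of the assumed NC-Minor Conjecture, $G$ must then contain $K_{q+2}$ as a minor. The remainder of the plan is to turn this $K_{q+2}$ minor of $G$ into a $K_{q+1}$ minor of $H$, contradicting the choice of $H$.

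To carry out the conversion I would exploit the very rigid structure of $G$ given by the construction in Fig.~\ref{fig:exampleforanyH}: the vertices of $G$ are the source $s$, one relay $r_v$ for each $v\in V(H)$, and one receiver $e_{uv}$ for each edge $uv\in E(H)$, and the only edges of $G$ are $s$--$r_v$ and $r_u$--$e_{uv}$--$r_v$. In particular, $G$ is bipartite with relays on one side and $\{s\}\cup\{\text{receivers}\}$ on the other, and every receiver has degree exactly $2$. Let $V_1,\ldots,V_{q+2}$ be the branch sets of the $K_{q+2}$ minor of $G$ and set $W_j=\{v\in V(H):r_v\in V_j\}$. A branch set containing no relay must, by connectivity and the non-adjacency of receivers, be either a single receiver or the singleton $\{s\}$; a single-receiver branch set has at most two neighbors in $G$ and so cannot be adjacent to the other $q+1$ branch sets once $q+2\ge 4$, so at most one $W_j$ can be empty, and without loss of generality the non-empty ones are $W_2,\ldots,W_{q+2}$.

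It remains to check that $W_2,\ldots,W_{q+2}$ witness a $K_{q+1}$ minor of $H$, which I expect to be the main technical step. Disjointness is inherited from the $V_j$'s. For connectivity of each $W_j$ in $H$, I would note that any $V_j$-path between two relays $r_u,r_v$ avoids $s$ and hence alternates relay--receiver--relay, yielding a sequence $r_u,e_{uw_1},r_{w_1},e_{w_1w_2},\ldots,r_v$ whose projection $u,w_1,w_2,\ldots,v$ is a walk inside $W_j$ in $H$. For adjacency of $W_j$ and $W_k$ with $j\ne k$, any $G$-edge between $V_j$ and $V_k$ must be a relay--receiver edge, say $r_a\in V_j$ and $e_{ab}\in V_k$; since the receiver $e_{ab}$ has only the two neighbors $r_a,r_b$ in $G$ and $V_k$ is connected, $V_k$ must contain $r_b$, placing $b$ in $W_k$ and giving the $H$-edge $ab$ between $W_j$ and $W_k$. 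The delicate part, and what I expect to be the main obstacle, is exactly this bookkeeping: one must use the bipartite relay/receiver structure of $G$ both to force path alternation (for connectivity) and to pin down the second endpoint of a crossing edge (for adjacency), and both arguments lean crucially on the specific shape of the network produced by Theorem \ref{thm:constructGwithH}.
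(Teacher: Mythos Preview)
Your approach is essentially the paper's: build $G$ from $H$ via Theorem~\ref{thm:constructGwithH}, use Lemma~\ref{lem:colorLeqFieldSize} and the hypothesis to force a $K_{q+2}$ minor in $G$, then exploit that every receiver has degree $2$ to push the minor down to a $K_{q+1}$ minor of $H$ after discarding the branch set containing the source. One small correction in your bookkeeping: your WLOG should be ``let $V_1$ be the branch set containing $s$ (if $s$ appears at all)'' rather than merely ``$W_2,\ldots,W_{q+2}$ are non-empty''; as written, some $V_j$ with $j\ge 2$ could contain $s$ together with relays, and then a $V_j$-path of the form $r_u\text{--}s\text{--}r_v$ need not project to a walk in $H$, so your connectivity argument for $W_j$ would fail---choosing $V_1$ to absorb $s$ fixes this and simultaneously forces $W_2,\ldots,W_{q+2}$ to be non-empty by your degree-$2$ argument.
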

\begin{proof}
By Theorem~\ref{thm:constructGwithH}, for any $(q+2)$-chromatic graph $H$, we can construct a $2$-minimal network $G$ with $H$ as its subtree graph. We claim that $G$ must contain a $K_{q+2}$ minor. Because if $G$ is $K_{q+2}$-minor-free, coding over $\mathbb{F}_{q}$ will be sufficient by the NC-Minor Conjecture. Due to Lemma~\ref{lem:colorLeqFieldSize}, $H$ is $(q+1)$-colorable, conflicting with the fact that its chromatic number is $q+2$. So $G$ contains a $K_{q+2}$ minor. To complete the proof, we only need to show that $H$ contains a $K_{q+1}$ minor.

Let $G'$ denote the subgraph of $G$ which can be contracted to $K_{q+2}$. Note that each receiver has degree $2$, so in the series of contractions from $G'$ to $K_{q+1}$, for any receiver in $G'$, at least one of its two adjacent edges is contracted. We can see that, after contracting one adjacent edge for each receiver in $G'$, the remaining graph is isomorphic to $H$ plus a source node connected to each node in $H$. As the source appears in at most one contracted component, we can conclude that $H$ contains a $K_{q+1}$ minor.
\end{proof}

\vspace{2mm}
The following theorem relates the minor of a multicast network with the minor of its subtree graph, and plays an important role in the subsequent proof.
\begin{theorem}\label{thm:HminorContainment}
For a $2$-minimal network $G$, if its subtree graph $H$ contains $M$ as a minor and the minimum degree of $M$, $\delta(M)\geq 3$, then $G$ contains $M$ as a minor.
\end{theorem}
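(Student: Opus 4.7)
My plan is to lift a minor model of $M$ from $H$ directly into $G$, using the fact that each node of $H$ is itself a connected subtree of $G$ and each edge of $H$ is witnessed by a common leaf (a receiver of in-degree $2$) shared by the two corresponding subtrees. Fix a minor model $\{V_1, \ldots, V_n\}$ of $M$ in $H$, with $n = |V(M)|$: the $V_i \subseteq V(H)$ are pairwise disjoint, each $V_i$ induces a connected subgraph of $H$, and for every $v_iv_j \in E(M)$ there is a witness edge $e_{ij} \in E(H)$ between $V_i$ and $V_j$. Let $r_{ij} \in V(G)$ denote the common leaf realizing $e_{ij}$; distinct $H$-edges correspond to distinct common leaves, so the chosen $r_{ij}$ are all distinct.

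For each $i$, first set $U_i = \bigcup_{T \in V_i} V(T) \subseteq V(G)$. Then $U_i$ is connected in $G$: every subtree is a tree, every two subtrees adjacent in $H$ share the vertex that witnesses their $H$-edge, and connectivity of $V_i$ in $H$ thus lifts to connectivity of $U_i$ in $G$. The $U_i$ may overlap only at a restricted set of vertices: the source $s$ (which can belong to several subtrees rooted at $s$), and receivers of in-degree $2$ (each the root of one subtree and a leaf of exactly two others, hence lying in at most three distinct subtrees). I refine the $U_i$ to pairwise-disjoint sets $W_i$ by assigning each shared vertex to exactly one $U_i$ according to a priority rule: if the vertex is needed to preserve the connectivity of some $U_k$, keep it in $W_k$; else if it realizes a witness $r_{ij}$, place it in either $W_i$ or $W_j$; otherwise assign arbitrarily. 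Reassignment is safe because deleting a leaf from a subtree keeps the subtree connected, and by the decomposition rule the root of every subtree has out-degree exactly one inside that subtree, so deleting the root also keeps the subtree connected.

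To verify the $M$-minor, each $W_i$ stays connected, since each constituent subtree remains connected after losing a reassigned boundary vertex and the retained internal witnesses still glue the subtrees of $V_i$ together along a spanning structure. For every $M$-edge $v_iv_j$, the witness $r_{ij}$ lies in exactly one of $W_i$ or $W_j$, say $W_i$; the in-edge into $r_{ij}$ from the subtree $T_b \in V_j$ then supplies an edge of $G$ from $W_j$ to $W_i$. The hypothesis $\delta(M)\geq 3$ is expected to enter in the conflict-resolution step: it forces every $V_i$ to have at least three outgoing $H$-edges and hence at least three distinct boundary leaves, giving the slack needed to route the connectivity of $W_i$ around whichever boundary vertex must be reassigned. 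The main obstacle I anticipate is exactly this global-consistency check at receivers contested simultaneously by three super-nodes; making the priority rule globally coherent while preserving the connectivity of every $W_i$ and every required witness edge is where the bulk of the work will go, and is where the minimum-degree condition seems to supply the missing flexibility.
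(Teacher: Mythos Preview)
Your plan is in the right spirit---lift the minor model of $M$ from $H$ up to $G$---but it takes a harder road than the paper's and leaves the central step (the global conflict resolution) as a promissory note. The paper avoids conflict resolution entirely by one observation: if a subtree $T$ has at least two leaves, then the vertex immediately below the root of $T$ is a Steiner node (in-degree~$1$), and that Steiner node belongs to $T$ and to no other subtree. Contract each such $T$ down to this representative with the leaves attached directly; distinct subtrees now have distinct representatives, and each shared leaf (an in-degree-$2$ node) corresponds to a unique $H$-edge. The branch sets in $G$ are then built from these representatives together with the terminal nodes witnessing the contracted $H$-edges, and disjointness is automatic---no priority rule is needed. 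This is also where $\delta(M)\ge 3$ actually enters, and it is not where you placed it: the condition lets one prune degree-$1$ vertices of $H$ out of every branch set $V_i$ (if $|V_i|\ge 2$ such a vertex's sole $H$-neighbor lies in $V_i$, so it carries no inter-component edge; if $|V_i|=1$ its sole vertex already has $H$-degree $\ge 3$). After pruning, every subtree in the model has at least two leaves and hence a Steiner representative. The hypothesis guarantees the existence of representatives, not ``slack'' in a reassignment scheme.

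Two concrete problems in your write-up. First, an in-degree-$2$ node is the root of as many subtrees as its out-degree, not of exactly one, so it can lie in arbitrarily many subtrees; your ``at most three'' bound is false. Second, the connectivity of each $W_i$ is carried by \emph{internal} shared leaves (those witnessing $H$-edges inside $V_i$), not by boundary leaves, so having three boundary leaves does not supply the rerouting flexibility you invoke; the place where a degree-$1$ subtree in $V_i$ causes trouble is exactly when its only leaf is the sole link to the rest of $V_i$, and $\delta(M)\ge 3$ fixes this via pruning, not via extra boundary edges. Your approach can likely be made to work, but the representative-node construction in the paper makes the whole conflict-resolution apparatus unnecessary.
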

\begin{proof}
For each subtree containing more than one leaves, the node next to the root must be a Steiner node, since otherwise the subtree decomposition will end with only one link extracted as the subtree. We use this node to represent this subtree and contract the edges until all the leaves are connected directly to this node. Nodes in $G$ fall into two categories: Steiner nodes representing a subtree, and nodes with in-degree $2$ (or source), which we call terminal nodes.

We call a set of connected nodes {\em a contracted component} of $H$ with respect to $M$, if they are contracted into one node in the series of contractions from $H$ to $M$. The idea is that, for all the contracted components of $H$, we find the corresponding disjoint contracted components in $G$ with the same inter-component links.

As $\delta(M)>2$, we can assume that each contracted component does not contain nodes of degree $1$, since there can not be any inter-component links connected to them if the component has more than one nodes. Therefore, for each node in a contracted component of $H$, the corresponding subtree has more than one leaves, and there is a unique inner node in $G$. For each contracted edge, there is a unique terminal node in $G$ with $2$ incoming links from the two interfered inner nodes. So for each contracted component $C$ in $H$, we can find a contracted component in $G$ as the inner nodes corresponding to a node in $C$ and the terminal nodes corresponding to a contracted edge in $C$. Due to the uniqueness of the corresponding inner nodes and terminal nodes, these contracted components in $G$ do not intersect. Finally, for the link in $M$, {\em i.e.} the link between two contracted components, there is a terminal node in $G$, and we can contract one of its two incoming links to make the other the link that interconnects two contracted components.
\end{proof}

\begin{theorem}\label{thm:HadwigerToNC}
({\bf Hadwiger Conjecture $\Longrightarrow$ NC-Minor Conjecture.}) If every $(q+2)$-chromatic graph contains $K_{q+2}$ as a minor, coding over $\mathbb{F}_{q}$ is sufficient for $K_{q+2}$-minor-free networks. 
\end{theorem}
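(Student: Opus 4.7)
The plan is to work with the subtree graph $H$ of $G$ and translate the hypothesis (the relevant case of Hadwiger's Conjecture) into a chromatic number bound on $H$, which Lemma \ref{lem:fieldSizeLeqColor} then converts into a bound on the required field size.

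First I would reduce to the case in which $G$ is $2$-minimal: as noted at the opening of Section IV, any multicast network carrying two flows contains a $2$-minimal subnetwork, and because minors are preserved under taking subgraphs, that subnetwork remains $K_{q+2}$-minor-free. So assume $G$ is $2$-minimal and let $H$ be its subtree graph. By Lemma \ref{lem:fieldSizeLeqColor}, it then suffices to show $\chi(H) \leq q+1$.

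Next I would argue this by contradiction. Suppose $\chi(H) \geq q+2$, and pick a $(q+2)$-critical subgraph $H' \subseteq H$, so that $\chi(H') = q+2$ exactly. The hypothesis then yields a $K_{q+2}$ minor inside $H'$, and hence inside $H$. Since $\delta(K_{q+2}) = q+1 \geq 3$ for every prime power $q \geq 2$, Theorem \ref{thm:HminorContainment} applies and lifts the $K_{q+2}$ minor of $H$ to a $K_{q+2}$ minor of $G$, contradicting the assumption that $G$ is $K_{q+2}$-minor-free. The boundary case $q=1$ (that is, $K_3$-minor-free networks, which are forests) is elementary and would be handled separately.

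The main obstacle I anticipate is verifying the minimum-degree hypothesis $\delta(M) \geq 3$ required by Theorem \ref{thm:HminorContainment}; this is precisely why the NC-Minor Conjecture is phrased in terms of $K_{q+2}$ rather than $K_q$, since the two-index shift keeps $\delta(K_{q+2}) = q+1$ safely above the threshold of $3$ for all prime powers $q \geq 2$. Every other ingredient---passing from $G$ to its subtree graph $H$, passing from a $(q+1)$-coloring of $H$ back to a code over $\mathbb{F}_q$, and lifting a clique minor from $H$ up to $G$---has already been supplied in the preceding subsections, so the proof should essentially be a chaining together of these tools.
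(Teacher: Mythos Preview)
Your proposal is correct and follows essentially the same route as the paper: pass to the subtree graph $H$, use Theorem~\ref{thm:HminorContainment} to transfer the $K_{q+2}$-minor-freeness from $G$ to $H$, invoke the Hadwiger hypothesis to bound $\chi(H)\le q+1$, and finish with Lemma~\ref{lem:fieldSizeLeqColor}. The paper runs this directly rather than by contradiction, and it silently glosses over both the reduction to a $2$-minimal subnetwork and the check that $\delta(K_{q+2})\ge 3$ needed for Theorem~\ref{thm:HminorContainment}; your write-up is actually more careful on both points (the boundary case $q=1$ you flag is moot, since the smallest prime power is $2$).
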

\begin{proof}
Let $G$ be a multicast network which is $K_{q+2}$-minor-free. By Theorem~\ref{thm:HminorContainment}, if its subtree graph $H$ contains a $K_{q+2}$ minor, so does $G$. Thus $H$ can not contain a $K_{q+2}$ minor. According to the Hadwidger conjecture, the chromatic number of $H$ must be smaller than $q+2$. Then, $H$ is $q+1$-colorable and by Lemma~\ref{lem:fieldSizeLeqColor}, coding over $\mathbb{F}_q$ is sufficient.
\end{proof}

\begin{corollary}
In a $K_{q}$-minor-free network, the minimum field size required by multicasting two information flows is of upper-bounded by $O(q\log{q})$.
\end{corollary}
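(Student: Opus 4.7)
The plan is to compose three results that are already in place: the minor-lifting property Theorem~\ref{thm:HminorContainment}, the coloring-to-code conversion Lemma~\ref{lem:fieldSizeLeqColor}, and the Kostochka--Thomason lower bound on the Hadwiger number referenced in Sec.~III-C (every $\chi$-chromatic graph contains a clique minor of order $\Omega(\chi/\log\chi)$). Starting from a $2$-minimal multicast network $G$ that is $K_q$-minor-free, I would first show that its subtree graph $H$ is also $K_q$-minor-free. Indeed, $\delta(K_q)=q-1\geq 3$ whenever $q\geq 4$, so Theorem~\ref{thm:HminorContainment} applies: any $K_q$ minor in $H$ would lift to a $K_q$ minor in $G$, a contradiction. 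The trivial cases $q\leq 3$ are covered by the first two rows of Table~\ref{table:summary}, where routing itself suffices.

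Next I would invoke the Hadwiger-number bound contrapositively. Writing $h(\chi)$ for the guaranteed clique-minor size in a $\chi$-chromatic graph, we have $h(\chi)=\Omega(\chi/\log\chi)$; so $K_q$-minor-freeness of $H$ forces $h(\chi(H))<q$, which rearranges to $\chi(H)=O(q\log q)$. A short check of the inversion confirms this: setting $\chi=Cq\log q$ gives $\log\chi\sim\log q$ for large $q$, hence $\chi/\log\chi\sim Cq$, which exceeds the threshold $q$ for any sufficiently large constant $C$.

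Finally I would turn this coloring bound into a field-size bound via Lemma~\ref{lem:fieldSizeLeqColor}: coding over $\mathbb{F}_{p}$ is sufficient for any prime power $p$ with $p+1\geq\chi(H)$. Bertrand's postulate supplies a prime within a factor of $2$ of $\chi(H)$, yielding a prime-power field of size $O(q\log q)$ and completing the corollary. For a multicast network that is not itself $2$-minimal but still supports rate $2$, one first extracts a $2$-minimal subnetwork; since both minor containment and $K_q$-minor-freeness are preserved under edge deletion, the same bound carries over.

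The main obstacle is almost entirely bookkeeping rather than conceptual: verifying that the contrapositive of the $\Omega(\chi/\log\chi)$ bound yields precisely the stated $O(q\log q)$, and absorbing the rounding to the nearest prime power into the hidden constant without inflating the asymptotic rate. Because Theorem~\ref{thm:HminorContainment} already does the heavy lifting of transferring minors between $G$ and its subtree graph, and Lemma~\ref{lem:fieldSizeLeqColor} already supplies the prime-power version of the coloring-to-coding bridge, the corollary reduces to a direct chaining of inequalities with no new graph-theoretic or algebraic machinery required.
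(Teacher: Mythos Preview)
Your proposal is correct and follows essentially the same three-step chain as the paper's proof: use Theorem~\ref{thm:HminorContainment} to transfer $K_q$-minor-freeness from $G$ to its subtree graph $H$, invoke the Kostochka--Thomason bound (cited as \cite{lowerBoundofHadwigerNumber}) to deduce $\chi(H)=O(q\log q)$, and then apply Lemma~\ref{lem:fieldSizeLeqColor} to obtain the field-size bound. The paper's argument is terser---it states the coloring bound directly rather than deriving it by contrapositive inversion, and it silently absorbs the prime-power rounding and the small-$q$ cases into the asymptotic notation---but your additional bookkeeping (Bertrand's postulate, the $q\le 3$ base cases, and the reduction to a $2$-minimal subnetwork) only makes the argument more explicit without changing its substance.
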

\begin{proof}
Researches on the Hadwiger Conjecture show that a $K_{q}$-minor-free graph can be colored with $O(q\log{q})$ colors \cite{lowerBoundofHadwigerNumber}.
According to Theorem~\ref{thm:HminorContainment}, we can see that the subtree graph $H$ is also $K_{q}$-minor-free. Therefore, $H$ can be colored with $O(q\log{q})$ colors. By Lemma~\ref{lem:fieldSizeLeqColor}, the minimum required field size is of order $O(q\log{q})$.
\end{proof}

\begin{corollary}
({\bf NC-Minor Conjecture true for $q=2,3,4$}.) Coding over $\mathbb{F}_{2},\mathbb{F}_{3},\mathbb{F}_{4}$ is sufficient in $K_{4}$-minor-free, $K_{5}$-minor-free, $K_{6}$-minor-free networks, respectively.
\end{corollary}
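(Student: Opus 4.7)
The plan is to apply Theorem~\ref{thm:HadwigerToNC} once for each of the three values $q \in \{2,3,4\}$, feeding in as the hypothesis the already-proven instances of the Hadwiger Conjecture for $q+2 \in \{4,5,6\}$. All the heavy lifting has been absorbed into Theorem~\ref{thm:HadwigerToNC} (which in turn relies on Lemma~\ref{lem:fieldSizeLeqColor} and Theorem~\ref{thm:HminorContainment}), so the corollary should reduce to a catalog of citations.

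First I would verify that the hypothesis of Theorem~\ref{thm:HadwigerToNC} is met in each of the three cases. For $q+2=4$, the claim ``every $4$-chromatic graph contains a $K_4$ minor'' is classical and elementary (easily derived from the observation that the minimum degree of a critical $4$-chromatic graph is at least $3$, so one can extract a $K_4$ subdivision and hence a $K_4$ minor). For $q+2=5$, the assertion is equivalent to the Four Color Theorem, as already noted in the discussion following the statement of the Hadwiger Conjecture in the paper. For $q+2=6$, the result is the theorem of Robertson, Seymour, and Thomas~\cite{Robertson93hadwiger}, who showed that Hadwiger's Conjecture for $q=6$ also reduces to the Four Color Theorem.

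Next I would invoke Theorem~\ref{thm:HadwigerToNC} three times. Plugging $q=2$ gives sufficiency of $\mathbb{F}_2$ in $K_4$-minor-free networks; $q=3$ gives sufficiency of $\mathbb{F}_3$ in $K_5$-minor-free networks; $q=4$ gives sufficiency of $\mathbb{F}_4$ in $K_6$-minor-free networks. One small bookkeeping point I would flag is that Theorem~\ref{thm:HadwigerToNC} requires $q$ to be a prime power, because its proof routes through Lemma~\ref{lem:fieldSizeLeqColor} which explicitly uses a primitive element of $\mathbb{F}_q$; this is satisfied in each of the three instances, since $2, 3, 4$ are all prime powers.

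I do not anticipate any real obstacle. The only thing to watch is a mild gap if the ambient network $G$ is not $2$-minimal: Theorem~\ref{thm:HadwigerToNC} is stated in the $2$-minimal setting, but this is handled by the standing observation at the top of Section~IV that a multicast network carrying two flows always contains a $2$-minimal subnetwork, and that the $K_{q+2}$-minor-free property is inherited by subnetworks. Thus the code constructed on the $2$-minimal core extends to a valid code on $G$ by leaving the redundant edges uncoded.
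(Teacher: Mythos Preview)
Your proposal is correct and follows exactly the paper's own argument: invoke Theorem~\ref{thm:HadwigerToNC} for each $q\in\{2,3,4\}$, supplying the known validity of the Hadwiger Conjecture for $q+2=4,5,6$ (the paper cites \cite{Robertson93hadwiger} for all three). The extra remarks you add about prime powers and passing to a $2$-minimal subnetwork are sound and consistent with the paper's standing conventions, but the core proof is identical.
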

\begin{proof}
The corollary follows from Theorem~\ref{thm:HadwigerToNC} and the correctness of the Hadwiger Conjecture for $q+2=4,5,6$ \cite{Robertson93hadwiger}.
\end{proof}

\vspace{2mm} \noindent {\bf Discussions.} (1) The fact that $\mathbb{F}_{2}$ is sufficient for $K_4$-minor-free networks implies that outerplanar networks and series-parallel networks require coding over $\mathbb{F}_{2}$ at most, since these two types of networks are special cases of $K_4$-minor-free networks \cite{graph}. However, no outerplanar or series-parallel network is known to require network coding at all. In Sec.~\ref{sec:k4}, we prove that tree packing indeed can achieve multicast capacity in $K_4$-minor-free networks. (2) The fact that $\mathbb{F}_{3}$ is sufficient for $K_5$-minor-free networks implies that planar networks requires coding over $\mathbb{F}_{3}$ at most, since a planar network cannot contain either a $K_5$ minor or a $K_{3,3}$ minor \cite{graph}. Therefore Corollary 2 generalizes the result that $\mathbb{F}_{3}$ is sufficient for planar networks \cite{Xiahou:NCinPlanar,NCinMinimal}. (3) The fact that $\mathbb{F}_{4}$ is sufficient for $K_6$-minor-free networks implies that apex networks require coding over $\mathbb{F}_{4}$ at most, since an apex network cannot contain a $K_6$ minor. An apex network is a network that is almost planar except for one node. Corollary 2 generalizes the result that $\mathbb{F}_{4}$ is sufficient for apex networks \cite{Xiahou:NCinPlanar}.

\begin{figure}[!htbp]
\centering
  \includegraphics[width=0.5\textwidth]{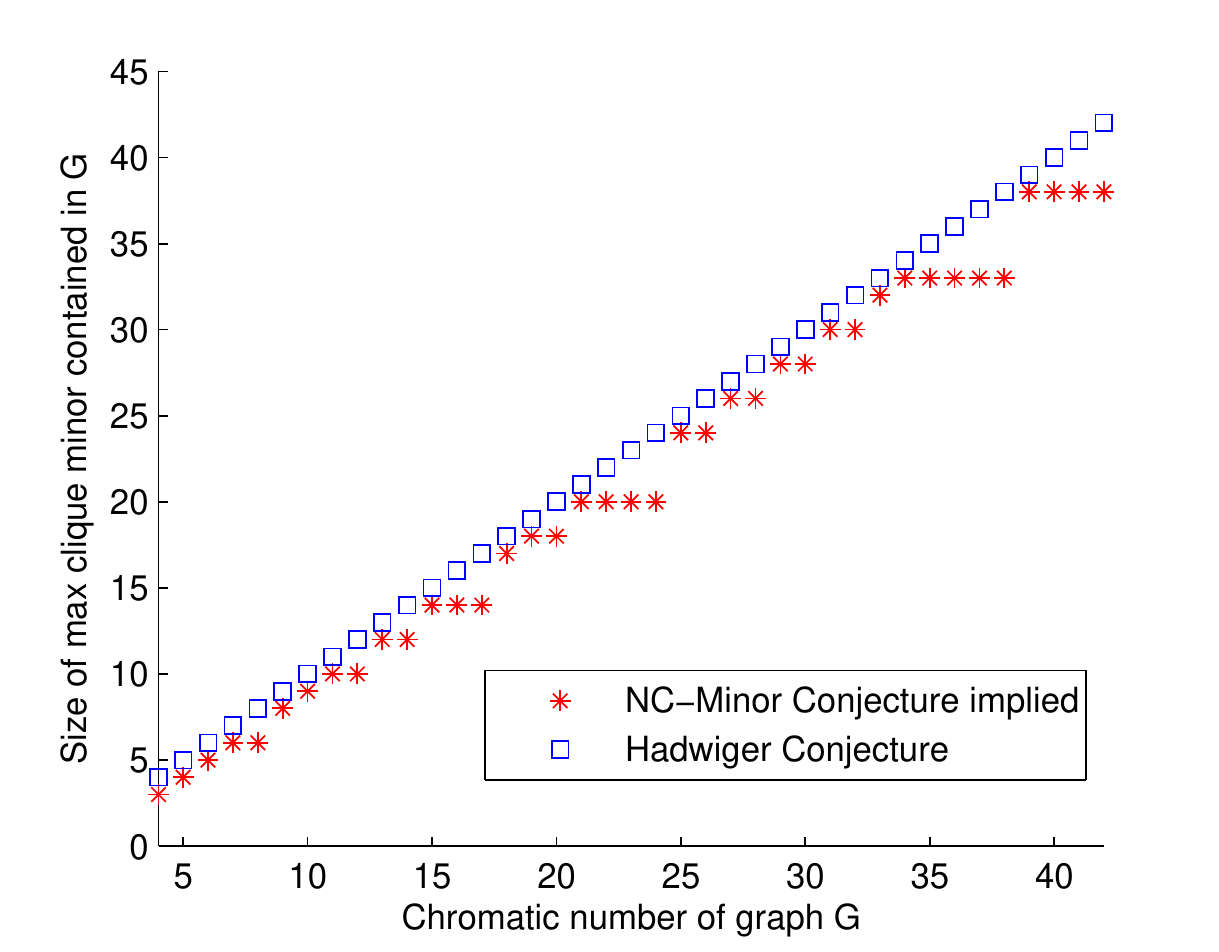}\\
  \caption{NC-Minor Conjecture v.s. Hadwiger Conjecture. }\label{fig:conjecturevs}
\end{figure}

In the NC-Minor Conjecture, the statement for a non-prime-power $q$ is implied from the case of largest prime power less than $q$. Therefore, we can see that the Hadwiger Conjecture is stronger than the NC-Minor Conjecture, while for $q$ being a prime power, the NC-Minor Conjecture is stronger than the weak Hadwiger Conjecture. As a $q$-chromatic graph always contains a subgraph of a smaller chromatic number, the NC-Minor Conjecture implies that a $q+2$-chromatic graph must contain a clique minor of size $g(q)+1$, where $g(q)$ is the largest prime power less than or equal to $q$ (Fig.\ref{fig:conjecturevs}).


\subsection{Sufficient Conditions for NC-Minor Conjecture}
\label{sec:conditions}

While evidences suggest that a general proof to the NC-Minor Conjecture is hard, one can identify specific scenarios in which the conjecture is true. Below we identify a sufficient condition for the conjecture, based on the concept of perfect graphs. A graph $G$ is a {\em perfect graph} if every induced subgraph of $G$ has equal chromatic number and largest clique size. The Strong Perfect Graph Theorem, whose proof is viewed as one of the most important breakthroughs in graph theory in the 21st century, states that a graph $G$ is perfect if and only if $G$ contains no odd holes or odd anti-holes \cite{graph}. An odd hole is an induced odd cycle of length at least $5$. An odd anti-hole is an induced subgraph that is the complement of an odd hole. In light of the Strong Perfect Graph Theorem, the odds of a graph being perfect is high.

\begin{theorem}
For a $2$-minimal network $G$ whose subtree graph $H$
is perfect, the NC-Minor Conjecture holds, {\em i.e.}, if $G$ is $K_{q+2}$-minor-free, coding over $\mathbb{F}_q$ is sufficient.
\end{theorem}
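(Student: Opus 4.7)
The plan is to exploit the implication chain
\[
G \text{ is } K_{q+2}\text{-minor-free} \;\Longrightarrow\; H \text{ is } K_{q+2}\text{-minor-free} \;\Longrightarrow\; \omega(H) \le q+1 \;\Longrightarrow\; \chi(H) \le q+1 \;\Longrightarrow\; \mathbb{F}_q \text{ suffices},
\]
with the perfection of $H$ serving as the crucial bridge that converts a clique bound into a chromatic bound.

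First I would apply Theorem~\ref{thm:HminorContainment} with $M = K_{q+2}$. Since $\delta(K_{q+2}) = q+1 \ge 3$ for every $q \ge 2$ (the only interesting range; at $q \le 1$ the network degenerates into something that is essentially a tree and routing is trivially sufficient), the contrapositive of that theorem states that $K_{q+2}$-minor-freeness of $G$ forces $K_{q+2}$-minor-freeness of $H$. Next, because subgraph containment is strictly stronger than minor containment, $H$ cannot contain $K_{q+2}$ even as a subgraph, so $\omega(H) \le q+1$. Now I invoke perfection: applying the defining property of a perfect graph to $H$ itself (which is an induced subgraph of itself), $\chi(H) = \omega(H) \le q+1$. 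Finally, since $q$ is a prime power, Lemma~\ref{lem:fieldSizeLeqColor} converts this $(q+1)$-coloring of $H$ into a feasible coding assignment over $\mathbb{F}_q$.

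The only place where a genuine obstacle could arise is the degree condition $\delta(M) \ge 3$ in Theorem~\ref{thm:HminorContainment}; but for $M = K_{q+2}$ this condition is automatic as soon as $q \ge 2$, so nothing is lost. For a non-prime-power value of $q$, the same argument delivers coding sufficiency over $\mathbb{F}_{f(q)}$, where $f(q)$ is the smallest prime power at least $q$: the coloring of $H$ still uses at most $q+1 \le f(q)+1$ colors, and one substitutes $f(q)$ for $q$ in the final application of Lemma~\ref{lem:fieldSizeLeqColor}, which aligns precisely with the formulation of the NC-Minor Conjecture in Sec.~\ref{sec:nc-minor}.
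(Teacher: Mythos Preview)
Your argument is correct and follows exactly the same chain as the paper's proof: use Theorem~\ref{thm:HminorContainment} to transfer $K_{q+2}$-minor-freeness from $G$ to $H$, bound $\omega(H)\le q+1$, invoke perfection to get $\chi(H)\le q+1$, and finish with Lemma~\ref{lem:fieldSizeLeqColor}. Your additional remarks on the degree hypothesis $\delta(K_{q+2})\ge 3$ and on non-prime-power $q$ are useful clarifications that the paper leaves implicit.
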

\begin{proof}
According to  Theorem~\ref{thm:HminorContainment}, $H$ is $K_{q+2}$-minor-free, the maximum clique it may contain is of $q+1$ nodes. As $H$ is perfect, it can be colored with $q+1$ colors. Then apply Lemma~\ref{lem:fieldSizeLeqColor}, we conclude that there is a network coding solution over the finite field $\mathbb{F}_q$.
\end{proof}

\section{$K_4$-Minor Free Networks: Network Coding = Tree Packing}
\label{sec:k4}

\begin{figure}
\centering
  \includegraphics[width=0.4\textwidth]{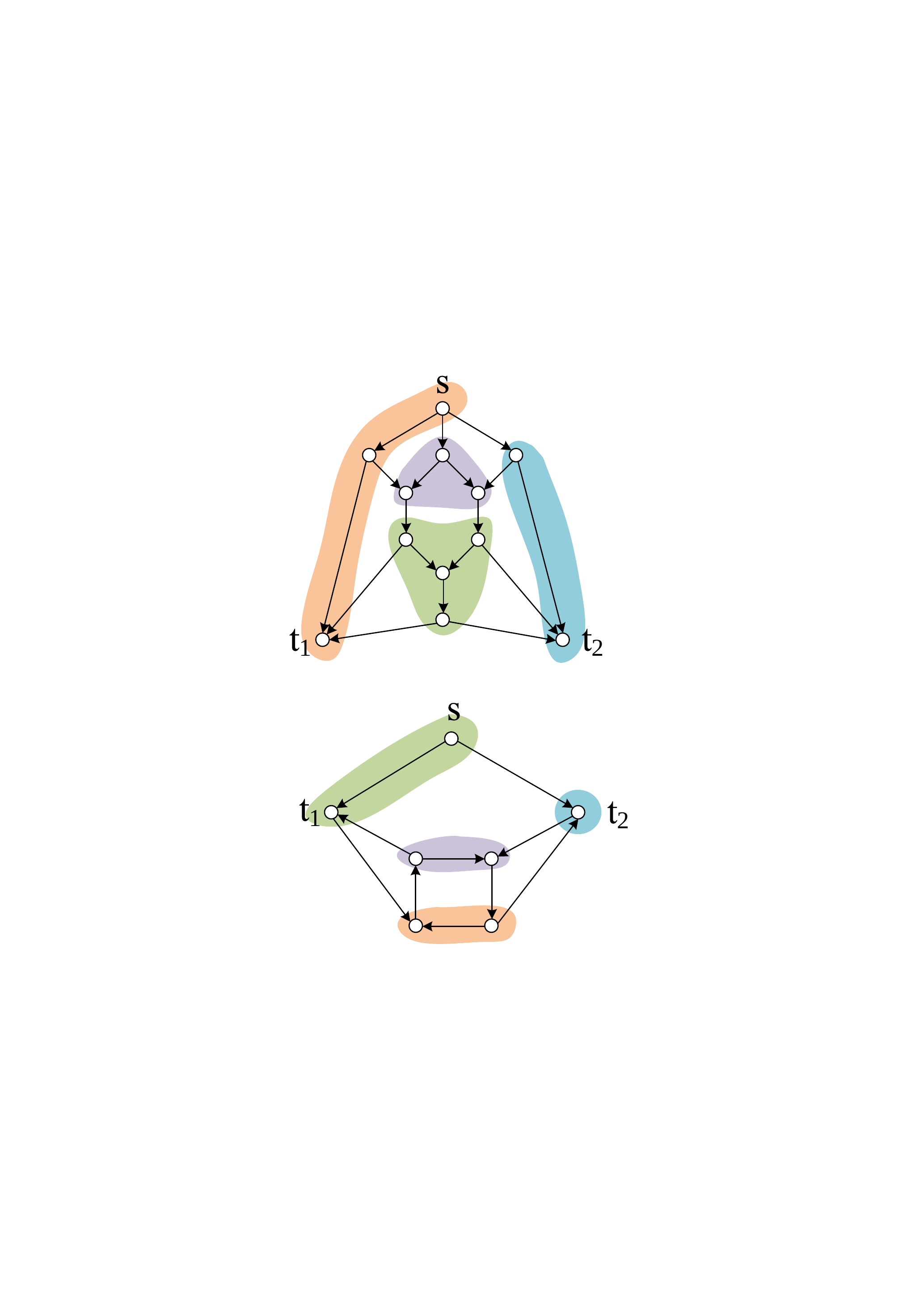}\\
  \caption{$K_4$ minors in networks that require network coding. The first is a planar multicast network with throughput $3$ \cite{Langberg:2006:ECN}. The second is a cyclic network that requires linear convolutional coding \cite{Li:2011:commutative}. }\label{fig:k4examples}
\end{figure}

From results in the previous section, we can conclude that if a multicast network $G$ requires network coding, then $G$ must contain a $K_3$-minor. In other words, a $K_3$-minor-free network such as a star or a tree never requires network coding. This result is not satisfactory, since all networks known to require network coding, such as the examples shown in Fig.~\ref{fig:k4examples}, contain not only $K_3$ but $K_4$ minors. No $K_4$-minor-free networks (including all series-parallel networks and all outerplanar networks) are known to require network coding.

In this section, we prove that, indeed, network coding and tree packing are equivalent in all $K_4$-minor-free networks. Towards the end of this section, we further show that this result is essentially tight, in that if $M$ is any graph non-trivially more complex than $K_4$, then we cannot claim that $M$ is a minor of all networks that require network coding.

In this section, we first briefly introduce the key techniques used in our main proof --- a new type of tree decomposition based on the treewidth concept, and then prove the equivalence between network coding and routing in $K_4$-minor-free networks.
\subsection{Tree Decomposition and Treewidth}

In graph theory, the {\em treewidth} of a graph $G$ measures how ``close'' $G$ is to a tree. Intuitively, it is tempting to convert a general graph to a tree where most problems have efficient algorithms and are well understood. The smaller the treewidth, the closer the graph behaves like a tree. While it is NP-hard to determine the treewidth of a general graph, many NP-hard problems in graph theory can be solved in polynomial time when the treewidth is limited to a fixed constant.

The treewidth is defined through the {\em tree decomposition} that maps a graph into a tree. Specifically, the tree decomposition and treewidth are defined as follows \cite{Hagerup:1998:characterizing}:
\begin{definition}
Given a graph $G(V,E)$, a tree decomposition is a tree $H(X,F)$ with each node $x\in X$ associated with a Bag $B_x \subset V$, such that
\begin{enumerate}
\item[{\bf P1.}] $\cup_{x\in X} B_x = V$, {\em i.e.,} every vertex of $G$ appears in some bag;
\item[{\bf P2.}] $\forall uv \in E, \exists x\in X: u,v \in B_x$, {\em i.e.}, every edge of $G$ is internal to some bag;
\item[{\bf P3.}] $\forall x,y,z\in X$, if $z$ lies on the path between $x$ and $y$, $B_z \subset B_x\cap B_y$, {\em i.e.}, for every vertex $v$ of $G$, the bags containing $v$ form a connected component.
\end{enumerate}
The width of a decomposition is $\max_{x\in X}|B_x|-1$. The treewidth of a graph is the smallest width of its tree decompositions.
\end{definition}

Intuitively speaking, a tree decomposition of $G$ divides edges in $G$ into several Bags, which form a tree with the natural adjacency relationship. Fig.~\ref{fig_k4free} illustrates this idea with a tree decomposition of an example network. Note that a connected graph is a tree if and only if its treewidth is $1$. As the network in Fig.~\ref{fig_k4free} has a tree decomposition of width $2$ and is not a tree, the network's treewidth is $2$.

\begin{figure}
\centering
\subfigure{
\includegraphics[width=0.32\textwidth]{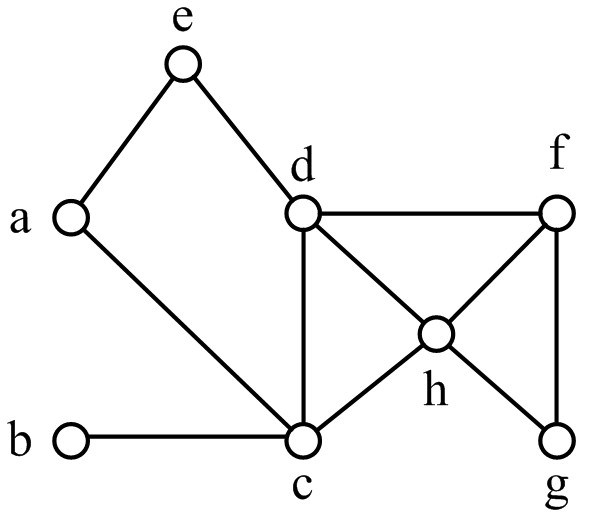}
}
\subfigure{
\includegraphics[width=0.45\textwidth]{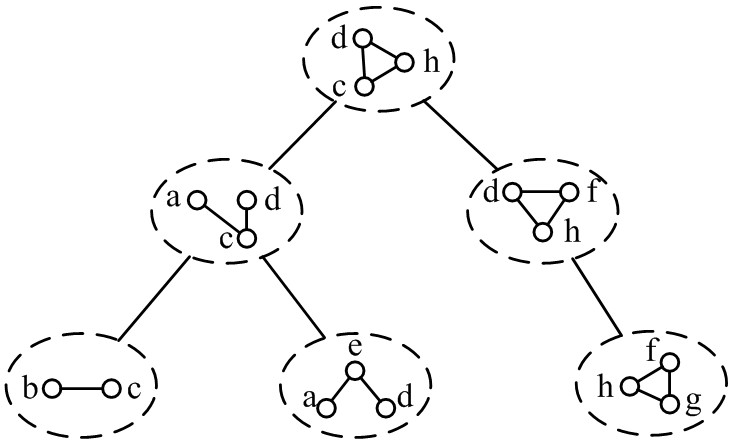}
}
\caption{An example network and its tree decomposition.} \label{fig_k4free}
\end{figure}



The following important theorem from graph theory connects the existence of $K_4$-minors in a graph with the treewidth of that graph, and will be employed in the proof of our main theorem later in this section.

\begin{theorem}\cite{Fiala:minorAndTreewidth}
\label{thm:treewidth2}
A graph $G$ does not contain a $K_4$-minor if and only if $G$ has treewidth at most $2$.
\end{theorem}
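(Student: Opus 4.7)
I would prove the two directions separately, with the nontrivial direction assembled inductively from a Dirac-type structural fact.

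\emph{Forward direction (treewidth at most $2$ $\Rightarrow$ no $K_4$-minor).} Suppose $(T,\{B_x\}_{x\in X})$ is a tree decomposition of width at most $2$, and suppose toward a contradiction that $G$ contains a $K_4$-minor witnessed by four pairwise-disjoint, connected branch sets $V_1,V_2,V_3,V_4\subseteq V(G)$ with at least one edge between each pair. For each $i$, let $T_i=\{x\in X: B_x\cap V_i\neq\emptyset\}$. Property P3 says that for every single vertex the bags containing it form a connected subtree; chaining these along the edges interior to $V_i$ (each of which sits in a common bag by P2) shows that $T_i$ is itself a connected subtree of $T$. For each pair $i\neq j$, P2 applied to an edge of $G$ between $V_i$ and $V_j$ produces a bag in $T_i\cap T_j$. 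Hence $T_1,\ldots,T_4$ are pairwise intersecting subtrees, and the Helly property for subtrees of a tree yields a common bag $B_{x^\ast}\in\bigcap_i T_i$. That bag contains a vertex from each $V_i$, so $|B_{x^\ast}|\geq 4$, contradicting width at most $2$.

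\emph{Backward direction (no $K_4$-minor $\Rightarrow$ treewidth at most $2$).} I would induct on $|V(G)|$. The base $|V(G)|\leq 3$ admits the single-bag decomposition. For the inductive step, I branch on $\delta(G)$. If $\delta(G)=0$, delete the isolated vertex, decompose inductively, and append a singleton bag. If $\delta(G)=1$, delete a leaf $v$ with neighbor $u$, decompose $G-v$, and attach a new leaf bag $\{u,v\}$ to any bag containing $u$ (existing by P1). If $\delta(G)=2$, pick $v$ with $N(v)=\{u,w\}$ and set $G'=(G-v)+uw$ (adding $uw$ only if absent). Any $K_4$-minor of $G'$ lifts to one of $G$ by reinserting $v$ along $uw$: if $uw$ witnesses an inter-branch-set connection between $X_a$ and $X_b$, replace $X_a$ by $X_a\cup\{v\}$ (still connected, since $v$ is adjacent to $u$) and use the original edge $vw$ of $G$ instead. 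Hence $G'$ is $K_4$-minor-free and by induction has a decomposition of width at most $2$; by P2 some bag $B_x$ contains both $u$ and $w$, and attaching a new leaf bag $\{u,v,w\}$ to $x$ yields a width-$2$ decomposition of $G$. Finally, if $\delta(G)\geq 3$, Dirac's theorem---every graph of minimum degree at least $3$ contains a subdivision of $K_4$---produces a $K_4$-minor, contradicting the hypothesis.

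\emph{Main obstacle.} The crux is the $\delta(G)\geq 3$ case, which rests on Dirac's classical $K_4$-subdivision theorem; I would cite this rather than reprove it. The other technical point---that the surgery $(G-v)+uw$ preserves $K_4$-minor-freeness---is dispatched by the short branch-set argument above. The remaining work is routine verification of P1--P3 for each newly attached bag.
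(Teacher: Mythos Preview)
The paper does not supply its own proof of this statement: Theorem~\ref{thm:treewidth2} is quoted as a known result from the graph-theory literature (with the citation \cite{Fiala:minorAndTreewidth}) and is used as a black box in the proof of Theorem~\ref{thm:noNCinK4free}. There is therefore nothing in the paper to compare your argument against.

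That said, your proposal is a correct and standard route to this classical fact. The forward direction via the Helly property of subtrees is the textbook argument. For the backward direction, your induction on $|V(G)|$ with a case split on $\delta(G)$ is sound; the key points---that the suppression $(G-v)+uw$ preserves $K_4$-minor-freeness, and that $\delta(G)\geq 3$ forces a $K_4$-subdivision by Dirac's theorem---are handled adequately. One minor remark: in the $\delta(G)=2$ branch-set lifting, you should also note the case where $u$ and $w$ lie in the \emph{same} branch set $X_a$ and the edge $uw$ is needed for the connectivity of $X_a$; there too, adjoining $v$ to $X_a$ restores connectivity in $G$ via the path $u\text{--}v\text{--}w$. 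With that small addition the argument is complete.
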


\subsection{Main Results}
To show that network coding is unnecessary in $K_4$-minor-free networks, we first introduce a useful notation $\rho(U)$ to denote the number of links entering the set of nodes $U\subset V$ from $V-U$. Then, the minimum $s,t$-cut can be represented as $\min\{\rho(U) | U \subset V, s\notin U, t\in U\}$. According to the Max-flow Min-cut Theorem, the maximum multicast rate with network coding can be rewritten as:
\[
\Tc = \min\{\rho(U) | s\notin U, T\cap U \neq \emptyset\}
\]

For simplicity, when $U=\{u\}$ is a singleton set, we omit the braces and use $\rho(u)$ to denote the number of links entering $u$, {\em i.e.}, the in-degree of node $u$.

We only need to consider the link-minimal networks, where all redundant links that do not affect the multicast rate are removed. These networks exhibit the following nice property which says we can determine the global metric $\lambda(v)$, the max flow from $s$ to $v$, by a local metric $\rho(v)$, the in-degree of node $v$.
\begin{lemma}
\label{lem:linkminimal}
In a multicast network $D(V,A)$ with a source node $s$, if removing any link will cause the max-flow $\lambda(v)$ to decrease for some node $v$, then $\lambda(v) = \rho(v)$ for all node $v\neq s$.
\end{lemma}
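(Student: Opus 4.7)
The plan is to prove $\lambda(v) = \rho(v)$ by establishing both inequalities. The easy direction is $\lambda(v) \le \rho(v)$: the set of links entering $v$ already forms an $s$-$v$ cut of capacity $\rho(v)$, so by max-flow min-cut the max-flow is no larger. The content of the lemma is the reverse inequality, and I would prove it by contrapositive: I will show that if some non-source node $v$ satisfies $\lambda(v) < \rho(v)$, then there exists a link whose removal does not decrease $\lambda(w)$ for any $w$, contradicting the link-minimality hypothesis.

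Suppose then that $\lambda(v) < \rho(v)$ for some $v \neq s$. Fix a minimum $s$-$v$ cut $U \subset V$ (so $s \notin U$, $v \in U$, and $\rho(U) = \lambda(v)$). Since the in-links of $v$ coming from $V \setminus U$ all contribute to $\rho(U)$, the number of such links is at most $\lambda(v) < \rho(v)$, so at least one in-link $e = (u, v)$ of $v$ has its tail $u$ inside $U$. This $e$ is my candidate: it does not cross the cut $U$, so removing it does not affect the capacity of $U$ and hence does not drop $\lambda(v)$. The hard part is ruling out that removing $e$ decreases $\lambda(w)$ for some \emph{other} node $w$.

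Suppose for contradiction that $\lambda(w)$ drops when $e$ is removed. Then $e$ belongs to every max-flow to $w$, which by the standard correspondence means there is a min $s$-$w$ cut $U_w$ crossed by $e$, i.e.\ with $u \notin U_w$ and $v \in U_w$. I would now apply submodularity of the cut function $\rho$ to the pair $U, U_w$:
\[
\rho(U \cap U_w) + \rho(U \cup U_w) \le \rho(U) + \rho(U_w) = \lambda(v) + \lambda(w).
\]
Because $v \in U \cap U_w$ and $s \notin U \cap U_w$, the intersection is an $s$-$v$ cut, giving $\rho(U \cap U_w) \ge \lambda(v)$; similarly $U \cup U_w$ is an $s$-$w$ cut, so $\rho(U \cup U_w) \ge \lambda(w)$. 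Hence both inequalities hold with equality, and in particular $U \cup U_w$ is itself a minimum $s$-$w$ cut. But $u \in U \subseteq U \cup U_w$ and $v \in U \cup U_w$, so the link $e = (u,v)$ lies entirely inside $U \cup U_w$ and does not cross it, contradicting the conclusion that $e$ is in every min $s$-$w$ cut. Therefore removing $e$ preserves $\lambda(w)$ for every $w$, contradicting link-minimality and completing the proof.

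The only real obstacle is the step in the third paragraph: finding the right way to combine the fixed cut $U$ with an arbitrary ``critical'' cut $U_w$ for a candidate link, and I expect the submodularity trick on $U \cup U_w$ to be the key move. Once one sees that enlarging the cut to include $u$ kills the crossing of $e$ while remaining minimum, the contradiction falls out immediately.
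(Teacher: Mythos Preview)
Your overall strategy---pick a specific in-link of $v$ whose removal preserves every $\lambda(w)$---is sound, and submodularity is the right tool. But the final contradiction does not go through as written. From ``removing $e$ decreases $\lambda(w)$'' you correctly infer that $e$ lies in every max $s$--$w$ flow, and hence that $e$ crosses \emph{some} min $s$--$w$ cut $U_w$. You then exhibit a min $s$--$w$ cut $U\cup U_w$ that $e$ does \emph{not} cross and call this a contradiction with ``$e$ is in every min $s$--$w$ cut''. That last statement, however, was never established: being in every max-flow is equivalent to crossing \emph{some} min-cut, not every one (on the path $s\to a\to b\to t$, the edge $a\to b$ lies in the unique max-flow yet crosses only one of the three min-cuts). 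So producing one min-cut that avoids $e$ is no contradiction. The same slip already appears earlier: that $e$ misses the particular cut $U$ does not by itself guarantee that removing $e$ preserves $\lambda(v)$, since $e$ may well cross some \emph{other} min $s$--$v$ cut.

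The repair is small but essential: choose $U$ to be the inclusion-minimal min $s$--$v$ cut (it exists because the min-cuts form a lattice under intersection). Your submodularity computation then shows $U\cap U_w$ is a min $s$--$v$ cut; minimality forces $U\subseteq U\cap U_w\subseteq U_w$, contradicting $u\in U\setminus U_w$. This closes both gaps at once, and the resulting argument is a clean variant of the paper's. For comparison, the paper works from the other direction: it invokes link-minimality on \emph{each} in-link $z_iv$ to obtain a min-cut $U_i$ (for some node $u_i$) that this link crosses, then uses submodularity to intersect a fixed min $s$--$v$ cut $W$ with all the $U_i$, obtaining a min $s$--$v$ cut $W'$ that excludes every $z_i$ and is therefore crossed by all $\rho(v)$ in-links---impossible since $\rho(W')=\lambda(v)<\rho(v)$.
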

\begin{proof}
By way of contradiction, assume that there is a node $v$ where $\lambda(v) < \rho(v)$. Let $k=\lambda(v)$. As $D$ is link minimal, for each incoming link $\arc{z_{i}v}, i=1,\cdots,m$, there is a min-cut $U_i$ for a non-source node $u_i$ containing this link, {\em i.e.}, $s, z_i \notin U_i,\  v,u_i \in U_i$, and $\rho(U_i) = \lambda(u_i)$.

Let $W$ be a min-cut for node $v$, {\em i.e.}, $s\notin W, v\in W$, and $\rho(U) = \lambda(v) = k$. As $\rho$ is a sub-modular function \cite{Bang-Jensen:2008:DTA},
\begin{eqnarray}
k + \lambda(u_i) = \rho(W) + \rho(U_i) \geq \rho(W \cup U_i) + \rho(W \cap U_i)\label{eqn:lemma3:1}
\end{eqnarray}
Since $W\cup U_i$ and $W \cap U_i$ form a cut for $u_i$ and $v$, respectively,
\begin{eqnarray}
\rho(W \cup U_i) \geq \lambda(u_i), \quad \rho(W \cap U_i) \geq k \label{eqn:lemma3:2}
\end{eqnarray}
Combining inequalities (\ref{eqn:lemma3:1}) and (\ref{eqn:lemma3:2}), we conclude that $\rho(W\cup U_i) = \lambda(u_i)$ and $\rho(W \cap U_i) = k$, which means $W\cup U_i$ is a min-cut for node $u_i$ and $W\cap U_i$ is a min-cut for node $v$. Therefore, $W' = W\cap U_1 \cap U_2 \cdots \cap U_m$ is a min-cut for node $v$. However, each of $v$'s neighbor $z_i \notin W'$, which means all links entering $v$ are in the min-cut $W'$. That contradicts the fact that $\rho(v) > \lambda(v) = \rho(W')$.
\end{proof}

To prove that network coding is unnecessary in all $K_4$-minor-free networks, we need to show that there are as many as $\Tc$ link disjoint trees connecting the multicast source to all receivers, so that we can deliver the messages along these trees without network coding. In fact, we prove a stronger result in Theorem \ref{thm:perfectPacking}, where the first property says there is a {\em perfect} tree packing scheme where each non-source node $v$ appears in $\lambda(v)$ trees, which is the maximum possible. The second property is introduced for the induction method.

The proof is somewhat involved, and we first provide an intuitive overview of its structure. The proof consists of three main steps. First, we apply the tree decomposition technique to $K_4$-minor-free networks, and use induction on the number of bags to simplify the problem to a simpler case, which is shown in Fig.~\ref{fig_leafBag}. Second, we propose an algorithm to construct the perfect tree packing scheme based on the induction hypothesis. Finally, we verify that the constructed trees satisfy the desired properties.

\begin{theorem}
\label{thm:perfectPacking}
For a link-minimal network $D(V,A)$ with source $s$ and a tree decomposition $(X,F)$ of width at most $2$, there is a tree packing scheme satisfying the following properties:
\begin{enumerate}
\item Each non-source node $v$ appears in $\lambda(v)$ trees.
\item For any two non-source nodes $u,v$ contained in the same bag, there are at least $\eta(u,v) = \min\{\rho(U) | U\subset V, u,v \in U, s\notin U\}$ trees each containing $u$ or $v$.
\end{enumerate}
\end{theorem}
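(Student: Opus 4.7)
The plan is to induct on the number of bags $|X|$ of the given tree decomposition, exploiting that width at most $2$ forces each bag to have at most three vertices and that each leaf bag is attached to the rest of the network through a separator of size at most two.

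For the base case $|X| = 1$, the network has at most three non-source vertices. By link-minimality and Lemma \ref{lem:linkminimal}, $\lambda(v) = \rho(v)$ for every non-source $v$, and the desired packing is produced by a short case analysis of the possible topologies on up to four vertices, where property 1 fixes the multiplicity and property 2 is immediate from the small cuts available.

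For the inductive step, I would pick a leaf bag $B_x$ with parent $B_y$ and set $S = B_x \cap B_y$, so $|S| \leq 2$. Property P3 of the tree decomposition forbids any edge of $G$ from joining $B_x \setminus S$ to $V \setminus B_x$, hence $S$ separates $B_x \setminus S$ from the remainder. Delete the vertices of $B_x \setminus S$ from $D$ and compensate by attaching a bundle of parallel $s$-to-$S$ edges of appropriate multiplicities, chosen to preserve the $\lambda$-values at all surviving vertices and the $\eta$-values at all surviving pairs. The resulting network $D'$ inherits the smaller tree decomposition $(X \setminus \{x\}, F')$ and remains link-minimal, so the inductive hypothesis furnishes a packing on $D'$ satisfying both properties. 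To obtain a packing on $D$, strip the augmenting edges from the trees of that packing and extend the affected trees into $B_x \setminus S$ using the original edges of $D$, introducing fresh trees rooted at $s$ as needed to match the in-degree counts of the new vertices. Because $|S| \leq 2$ and $|B_x \setminus S| \leq 3$, the extension reduces to a bounded case analysis governed by the local topology on $B_x$.

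The main obstacle is verifying property 2 inside the leaf bag after the extension: for every pair $u,v \subseteq B_x$ the new packing must contain at least $\eta(u,v)$ trees touching $u$ or $v$. Property 1 for vertices in $B_x \setminus S$ follows from a straightforward max-flow min-cut count across $S$, but property 2 demands that enough trees in the inductive packing thread through $S$ so that, once extended, they also achieve the required coverage for pairs inside $B_x$. Establishing this relies on property 2 of the inductive hypothesis applied to the pair in $S$ (or property 1 when $|S|=1$), combined with submodularity of $\rho$ in the style of Lemma \ref{lem:linkminimal}, to translate the cut values $\eta(u,v)$ taken in $D$ into the corresponding cut values used in the definition of the packing on $D'$. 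Organizing this translation consistently across the cases $|S|\in\{0,1,2\}$ and $|B_x\setminus S|\in\{1,2,3\}$, and in particular handling the subcase where $s$ itself lies in $B_x$, is where the bulk of the technical work will sit.
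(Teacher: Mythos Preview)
Your overall induction on the number of bags, with the leaf bag separated from the rest by $S = B_x \cap B_y$ of size at most two, is exactly the skeleton the paper uses. The divergence is in the reduction step, and there your mechanism does not work as stated.

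You propose to delete $B_x \setminus S$ and \emph{compensate by adding parallel $s$-to-$S$ edges}. This breaks the argument in several places. First, an edge $s \to u$ with $u \in S$ requires $s$ and $u$ to share a bag (property P2); since the paper roots the tree at a bag containing $s$ and takes $B_x$ to be a leaf, $s$ and $u$ need not lie in any common bag of $(X\setminus\{x\},F')$, so $D'$ no longer has a width-$2$ tree decomposition and the induction hypothesis does not apply. Second, even ignoring that, there is no reason $D'$ stays link-minimal after you inject new source edges, and link-minimality is a hypothesis you need for Lemma~\ref{lem:linkminimal} and for the inductive call. Third, the ``strip the augmenting edges and extend'' step is the real problem: a tree in the $D'$-packing that reaches $u$ via an augmenting $s\to u$ edge has, after stripping, a subtree dangling at $u$ with no path back to $s$; the only candidates for reconnection are the original incoming edges of $u$, but those are already consumed by the other $\lambda(u)$ trees. ``Introducing fresh trees rooted at $s$'' cannot help, since any such tree must also enter $S$ through already-used edges.

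The paper avoids all of this by a \emph{splitting-off} operation instead of source augmentation. In the nontrivial case $|S|=2$, say $S=\{u,v\}$ and $B_x\setminus S=\{t\}$, it replaces each pair $\arc{ut},\arc{tv}$ by a single $\arc{uv}$ (and symmetrically), obtaining a graph $H'$ on $V\setminus\{t\}$. The new edges live between $u$ and $v$, both already in $B_y$, so the smaller tree decomposition is inherited verbatim; $\lambda$ and $\eta$ are preserved for all surviving vertices automatically; and each split edge in a tree of the inductive packing unfolds back to a genuine $u\!-\!t\!-\!v$ path in $D$, so no stripping or reconnection is needed. What remains is to attach the leftover $\Delta(u,t)$ and $\Delta(v,t)$ links into existing trees with a careful priority rule (prefer trees missing the other endpoint), and the verification of properties 1 and 2 for $t$ and for the pairs $\{u,t\},\{v,t\}$ is then a direct cut calculation. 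Your instinct that property~2 is where the work concentrates is correct, but the reduction that gets you there has to be splitting-off, not source augmentation.
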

\begin{proof}
We prove the theorem by induction on the number of bags $N$ in the tree decomposition.

\vspace{1mm}
\noindent {\bf Treewidth Based Tree Decomposition.}
The theorem holds for the case of $N=1$, since there are at most 3 nodes in the bag.
Next, assume the theorem holds for any graph that has a tree decomposition of $N\leq k-1$ bags and width at most 2. We need to prove the case $N=k$.

Let $\{B_x| x\in X\}$ denote the $k$ Bags of the tree decomposition $(X,F)$. We need to find out a set of link-disjoint trees satisfying the desired properties.

We choose a bag containing the source as the root of the tree $(X,F)$, and let $B$ be a leaf bag with a parent bag denoted as $B'$. Let $D'$ be the subgraph of $D$ induced by the nodes that appear in some bag other than $B$. So $D'$ has a tree decomposition of $k-1$ bags. Note that we only need to consider the case $B \subsetneq B'$, since otherwise $D$ is the same as $D'$, the desired tree packing exists according to the induction assumption. Now consider the number of common nodes contained in both $B$ and $B'$.
\begin{itemize}
\item $|B\cap B'| = 0$. In this case, nodes in $B$ are disconnected from the source $s$, as there is no edge connecting the set of nodes $B$ to $V\backslash B$ according to the definition of tree decomposition (P2, P3).
\item $|B\cap B'| = 1$. According to the definition of tree decomposition, removing the common node $v$ will separate the other nodes of $B$ from $s$, which means $v$ is a cut node. Let $D''$ be the subgraph of $D$ induced by $B$ with source node $v$. We can obtain the desired tree packing by concatenating tree packings satisfying the desired property in $D'$ and $D''$. Because there exists a minimum $s$-$v$ cut $U$ that contains $u_1,u_2$, $\lambda(v) \geq \eta(u_1,u_2)$ for any two nodes $u_1,u_2\in B$, which assures the concatenation is feasible, {\em i.e.}, there are enough trees in the tree packing of $D'$ where trees from $D''$ can be attached.
\end{itemize}

\begin{figure}
\begin{center}
\includegraphics[width=0.5\textwidth]{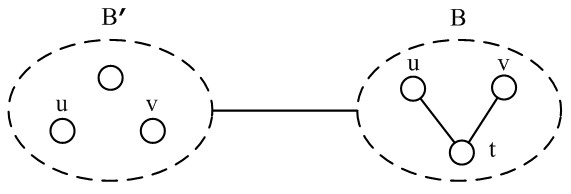}
\caption{The non-trivial case for the leaf bag $B$.} \label{fig_leafBag}
\end{center}
\end{figure}
The only non-trivial case is $|B\cap B'| = 2$, which is shown in Fig.~\ref{fig_leafBag}. Denote the two common nodes as $u,v$, and the new node as $t$.

\vspace{1mm}
\noindent {\bf Construct a Tree Packing Scheme.} For the non-trivial case, we construct the desired tree packing through the following steps:

(1) Split links at $t$, which means to replace pairs of unit capacity links $\arc{vt},\arc{tu}$ with $\arc{vu}$, and replace $\arc{ut},\arc{tv}$ with $\arc{uv}$, until there is no such pairs. Denote this new graph as $H$. Note that $c(t,u)\leq c(v,t)$ and $c(t,v)\leq c(u,t)$, since $D$ is link minimal. Let $\Delta(v,t) = c(v,t) - c(t,u)$ and $\Delta(u,t) = c(u,t) - c(t,v)$. Fig.~\ref{fig_splitPreprocess} illustrates this operation, where the original links between $u$ and $v$ are not shown in this figure.
\begin{figure}
\begin{center}
\includegraphics[width=0.4\textwidth]{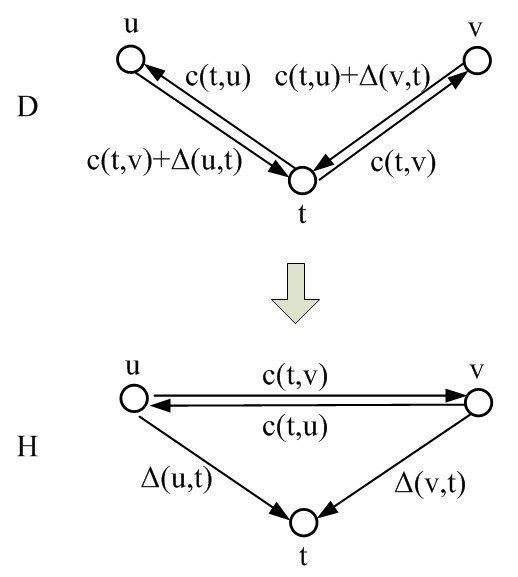}
\caption{Step 1: Convert $D$ to $H$.} \label{fig_splitPreprocess}
\end{center}
\end{figure}

(2) Delete $t$ from $H$ and note the new graph as $H'$. It can be seen that for any two nodes $w,z\neq t$, $\lambda_{H'}(w) = \lambda_{H}(w) = \lambda_D(w)$ and $\eta_{H'}(w,z) = \lambda_{H}(w) = \eta_D(w,z)$, so we omit the subscription denoting which graph $\lambda$ is referred to. As $t$ is the only node introduced in Bag $B$, $H'$ has a tree decomposition of $k-1$ bags.

(3) According to the induction hypothesis, there exist link disjoint trees $\tau_1, \cdots, \tau_m$ in $H'$ where $u$ appears in $\lambda(u)$ trees, $v$ appears in $\lambda(v)$ trees, and the number of trees containing $u$ or $v$ is at least $\eta(u,v)$. We convert the tree packing $\tau_1, \cdots, \tau_m$ of $H'$ into a tree packing scheme of $D$ in the following way: first, replace the $\arc{uv}$ and $\arc{vu}$ links added in the splitting step with pairs of links $\arc{ut},\arc{tv}$ and $\arc{vt},\arc{tu}$, respectively; second, attach the remaining $\Delta(u,t)$ links from $u$ to $t$ to the trees containing $u$ but $t$ under the rule that choose the tree contains $u$ but $v$ whenever possible; finally, attach the residual $\Delta(v,t)$ links from $v$ to $t$ to the trees in a similar way.

\vspace{1mm}
\noindent {\bf Verify the Correctness of the Construction.}
We need to show that the resulting trees satisfy the desired properties: 1) for all $ w\in V\backslash\{s\}$, $w$ appears in $\lambda(w)$ trees; 2) for any two nodes $w,z\in V\backslash\{s\}$ contained in the same bag, there are at least $\eta(w,z)$ trees contain either $w$ or $z$.

From the construction, we can see that each non-source node $w\neq t$ still appears in the same set of trees as in $\tau_1, \cdots, \tau_m$. In order to show that $t$ appears in $\lambda(t)=\rho(t)$ trees, it is sufficient to show that all these $\rho(t) = c(u,t)+c(v,t)$ incoming links appear in some trees.

Each of the links added in the splitting step must appear in some tree of $\tau_1, \cdots, \tau_m$, since otherwise, we may remove a link $\arc{tu}$ or $\arc{tv}$ without reducing the max-flow from source to any nodes. So it remains to show that the $\Delta(u,t)$ $\arc{ut}$ links and $\Delta(v,t)$ $\arc{vt}$ links must be all attached to some trees.

As we attach the $\Delta(u,t)$ $\arc{ut}$ links first, we run out of trees containing $u$ only if $u\neq s$, and
\[
\Delta(u,t)+c(t,v)+c(t,u) > \lambda(u)
\]
To show that this case cannot happen, consider the cut $U=\{u,t\}$. As $t$ only connects to node $u$ and $v$,
\[
\rho(U) = c(v,t) + \rho(u) - c(t,u) = c(v,t) + \lambda(u) - c(t,u)
\]
As $U$ is an $s$-$t$ cut,  $\rho(U) \geq \lambda(t)$, which means
\begin{eqnarray*}
\lambda(u) & \geq & \lambda(t) + c(t,u) - c(v,t) \\
& = & c(u,t) + c(v,t) + c(t,u) - c(v,t) \\
& = & c(u,t) + c(t,u) \\
& = & \Delta(u,t) + c(t,v) + c(t,u)
\end{eqnarray*}

When attaching the $\Delta(v,t)$ $\arc{vt}$ links, we run out of trees containing $v$ only if $v\neq s$, and
\[
\Delta(v,t)+c(t,u)+c(t,v) > \lambda(v)
\]
or because some trees have been occupied in previous steps. The former case is impossible because of the similar reason as the case of $u$. For the latter case, as in previous steps we choose trees not containing $v$ first, we run out of $v$-trees only if
\[
\Delta(u,t) + \Delta(v,t) + c(t,u) + c(t,v) = \lambda(t) > \eta(u,v)
\]
This is also impossible because for a minimum cut $U$ separating $u,v$ from $s$ in $H'$, $U\cup \{t\}$ is an $s$-$t$ cut,
\[
\eta(u,v) = \rho(U) \geq \lambda(t)
\]

To prove the second property, it is sufficient to show that there exist at least $\eta(u,t)$ trees containing either $u$ or $t$. The case for the other pair of nodes $v,t$ is similar.

Let $n_u$ denote the number of trees in $\tau_1, \cdots, \tau_m$ that contains $u$ but $v$, $n_v$ denote the number of trees that contains $v$ but $u$, and $n_{uv}$ denote the number of trees that contains both $u$ and $v$. Due to the induction hypothesis, $n_u+n_{uv} = \lambda(u), n_v + n_{uv} = \lambda(v), n_u + n_v + n_{uv}\geq \eta(u,v)$. According to the rule that we always choose trees counted in $n_v$ first while attaching the $\Delta(v,t)$ $\arc{vt}$ links, the number of trees containing either $u$ or $t$ is
\[\lambda(u) + \min\{n_v, \Delta(v,t)\}\]
As the minimum cut separating $u,v$ from $s$ in $H'$ is also a cut separating $u,t$ from $s$ in $D$, therefore
\[
\eta(u,t) \leq \eta(u,v) \leq \lambda(u) + n_v
\]
On the other hand, as $U = \{u,t\}$ is a cut separating $u,t$ from $s$,
\begin{eqnarray*}
\eta(u,t) \leq \rho(U) & = & \rho(u) + c(v,t) - c(t,u) \\
& = & \lambda(u) + \Delta(v,t)
\end{eqnarray*}
Thus, there exist at least $\eta(u,t)$ trees containing either $u$ or $t$, which completes the proof.
\end{proof}

\begin{theorem}
For a multicast network $D(V,A)$ whose underlying topology does not contain a $K_4$-minor, network coding is unnecessary to achieve the max throughput.
\label{thm:noNCinK4free}
\end{theorem}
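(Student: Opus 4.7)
The plan is to deduce the theorem from the two structural tools developed earlier: Theorem \ref{thm:treewidth2}, which converts the $K_4$-minor-free hypothesis into a tree decomposition of width at most $2$, together with Theorem \ref{thm:perfectPacking}, which provides a perfect tree packing on any network equipped with such a decomposition.

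First I would reduce to the link-minimal case. Iteratively delete any link of $D$ whose removal does not lower $\lambda(t)$ for any receiver $t\in T$. Each such deletion preserves the multicast rate $\Tc$, and since edge deletions can only destroy minors, the pruned network remains $K_4$-minor-free. Once $D$ is link-minimal, Lemma \ref{lem:linkminimal} guarantees $\lambda(v)=\rho(v)$ for every non-source $v$.

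Second, Theorem \ref{thm:treewidth2} yields a tree decomposition $(X,F)$ of the underlying topology of width at most $2$. Feeding $D$ and $(X,F)$ into Theorem \ref{thm:perfectPacking} produces link-disjoint trees $\tau_1,\ldots,\tau_m$ rooted at $s$ in which every non-source vertex $v$ appears in exactly $\lambda(v)$ of the trees; in particular each receiver $t\in T$ is covered by at least $\Tc=\min_{t'\in T}\lambda(t')$ link-disjoint trees.

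Third, I would interpret this packing as a pure routing scheme of rate $\Tc$: from the $\tau_i$, extract a sub-collection of $\Tc$ trees each spanning $\{s\}\cup T$, and assign the $j$-th source symbol to the $j$-th such tree. Since these Steiner trees are link-disjoint and carry distinct symbols, every receiver recovers all $\Tc$ symbols with no algebraic combination at any relay. Thus the routing rate matches $\Tc$ and network coding confers no advantage over tree packing.

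The step I expect to require the most care is precisely this final extraction. Theorem \ref{thm:perfectPacking} guarantees per-vertex covering multiplicities, not that a single tree spans all receivers, and simple Hall-type or hypergraph coloring arguments can fail for general families of trees satisfying only the multiplicity property (e.g.\ three receivers covered by a triangle-like pattern $\{\{1,2\},\{1,3\},\{2,3\}\}$). I expect the gap to be closed by a second induction along $(X,F)$ that runs in parallel with the one used to prove Theorem \ref{thm:perfectPacking}: the ``spans every receiver in its bag subtree'' invariant is threaded through the same case split on $|B\cap B'|\in\{0,1,2\}$, with Property~2 of Theorem \ref{thm:perfectPacking} (the $\eta(u,v)$ lower bound on trees touching two receivers sharing a bag) supplying exactly the local coupling needed to realign coverage across receivers whenever bags are merged. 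If that joint induction proves unwieldy, an alternative is to refine the construction inside Theorem \ref{thm:perfectPacking} itself, tracking Steiner-ness as an additional bookkeeping invariant through each splitting/attaching step at the common nodes $u,v,t$.
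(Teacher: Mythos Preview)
You have correctly identified the overall architecture---reduce to link-minimal, invoke Theorem~\ref{thm:treewidth2} to get a width-$2$ tree decomposition, then Theorem~\ref{thm:perfectPacking} to obtain a perfect packing---and you have also correctly put your finger on the one genuine gap: Property~1 of Theorem~\ref{thm:perfectPacking} only gives per-vertex multiplicities, and your triangle example shows that this alone does not hand you $\Tc$ trees each spanning all of $T$.

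Where you diverge from the paper is in how to close that gap. You propose either a parallel induction on the tree decomposition carrying a ``Steiner-ness'' invariant, or an internal refinement of the construction in Theorem~\ref{thm:perfectPacking}. The paper avoids all of this with a one-line device: before invoking Theorem~\ref{thm:perfectPacking}, adjoin a fresh super-source $s'$ together with exactly $h=\Tc$ parallel arcs $s'\to s$, and treat $s'$ as the source. Adding a pendant vertex cannot create a $K_4$ minor, so the hypothesis survives. Now every tree in the packing must use one of the $h$ arcs leaving $s'$, so there are at most $h$ trees in total; but each receiver lies in at least $h$ of them by Property~1. Hence every receiver lies in \emph{every} tree, and the packing is automatically a collection of $h$ link-disjoint Steiner trees. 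No second induction, no appeal to Property~2, no bookkeeping.

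Your proposed route might be made to work, but it is substantially heavier and, as you yourself note, the coupling across bags is delicate. The virtual-source trick is the missing idea; once you see it, the theorem follows from Theorem~\ref{thm:perfectPacking} in three lines.
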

\begin{proof}
For a multicast session $s,T$, let $\Tc = h = \min\{\lambda(t)| t\in T\}$ be the max throughput with network coding. We can assume $D$ to be link-minimal, since the network remains $K_4$-minor-free after removing the redundant links. For convenience, we add a virtual node $s'$ and $h$ directed links from $s'$ to $s$, and consider $s'$ as the new source. Note that this will not introduce $K_4$-minors. According to theorem \ref{thm:treewidth2}, $D$ has treewidth 2 at most. Applying theorem \ref{thm:perfectPacking}, there is a tree packing scheme where each receiver appears in at least $h$ trees. As there are only $h$ links leaving $s'$, we can see that the tree packing is actually $h$ link disjoint trees that reaches all the receivers.
\end{proof}

\vspace{1mm}
\noindent {\bf Discussions.}
According to the proof, once a tree decomposition is provided for a $K_4$-minor-free network, we can construct the optimal routing solution in $O(|V||E|)$ time.

Undirected networks are another popular network model where the capacity of a link can be freely allocated to its two opposite directions, so that both network coding and routing can choose their preferred network orientation. Theorem \ref{thm:noNCinK4free} also implies that in an undirected $K_4$-minor-free network, network coding can not improve the multicast throughput, since routing can achieve the same throughput even with the link capacity allocation of the optimal network coding solution.

For a multicast session with non-uniform rate demands \cite{Casuto:nonuniform:2005}, Theorem \ref{thm:perfectPacking} also implies routing is sufficient to achieve the max throughput in $K_4$-minor-free networks, since in the perfect tree packing scheme, each receiver $t$ appears in $\lambda(t)$ trees and therefore can receive at its maximum possible receiving rate.

A natural question is: can the result in Theorem \ref{thm:noNCinK4free} be further strengthened? In Fig.~\ref{fig:minK4example}, we present a $2$-minimal network with source node $s$ and receivers $t_1,t_2$, which requires network coding for achieving a multicast rate $2$. A $K_4$-minor can be obtained by contracting edge $(s,t_1)$. From this example network, we can conclude that any other minors more complicated than this topology can not be guaranteed to appear in every multicast network that requires network coding. Hence Theorem~\ref{thm:noNCinK4free} is tight.

\begin{figure}
\centering
  \includegraphics[width=.4\textwidth]{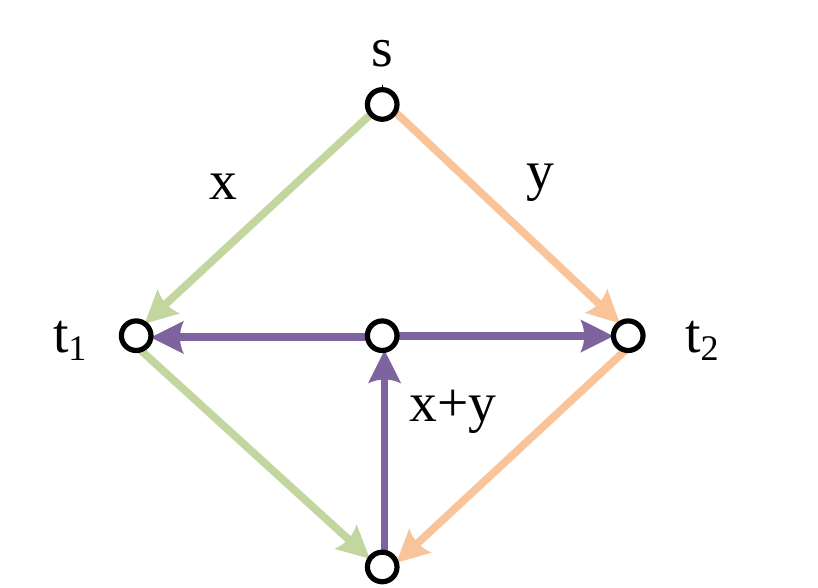}\\
  \caption{The smallest network that requires network coding. It contains $K_4$ but nothing more complex as a minor.}\label{fig:minK4example}
\end{figure}

\section{Conclusion}
In this paper, we proposed the NC-Minor Conjecture that connects network coding with graph minors, stating that a multicast network requiring a certain field for coding must contain a corresponding clique minor. We prove that the NC-Minor Conjecture is almost equivalent to the well-known Hadwiger Conjecture in graph theory. Combining this equivalence with previous studies on the Hadwiger Conjecture, we show that in a $K_{q+2}$-minor-free network, coding over $\mathbb{F}_q$ is sufficient for the cases $q=2,3,4$. For a large $q$, coding over $\mathbb{F}_{O(q\log{q})}$ is sufficient in $K_q$-minor-free networks. We further prove that a multicast network that needs network coding for achieving capacity must contain a $K_4$ minor. Our results imply that coding over very small finite fields, or even no coding at all, are sufficient for a number of special networks.






\bibliographystyle{IEEEtran}
\bibliography{IEEEabrv,NCMinor-arxiv}

\end{document}